\definecolor{darkblue}{rgb}{0,0,.5}
\newcommand{\R}{\mathbb{R}}
\newcommand{\N}{\mathbb{N}}
\newtheorem{theorem}{Theorem}
\newtheorem{lemma}{Lemma}[section]
\newtheorem{definition}{Definition}[section]
\renewcommand\log{\ln}
\begin{document}

\title{ 
LinkedIn's Audience Engagements API: A Privacy Preserving Data Analytics System at Scale}
\author[1]{Ryan Rogers}
\author[1]{Subbu Subramaniam}
\author[1]{Sean Peng}
\author[1]{David Durfee}
\author[1]{Seunghyun Lee}
\author[1]{Santosh Kumar Kancha}
\author[1]{Shraddha Sahay}
\author[1]{Parvez Ahammad}

\affil[1]{LinkedIn Corporation}

\maketitle

\begin{abstract}
We present a privacy system that leverages differential privacy to protect LinkedIn members' data while also providing audience engagement insights to enable marketing analytics related applications.  We detail the differentially private algorithms and other privacy safeguards used to provide results that can be used with existing real-time data analytics platforms, specifically with the open sourced Pinot system.  Our privacy system provides user-level privacy guarantees.   As part of our privacy system, we include a budget management service that enforces a strict differential privacy budget on the returned results to the analyst.  This budget management service brings together the latest research in differential privacy into a product to maintain utility given a fixed differential privacy budget.  
\end{abstract}


\section{Introduction}
LinkedIn's Audience Engagement API is a platform that enables marketers (analysts) aggregated insights about members' content engagements while ensuring member (user) data is protected.  Consider an advertiser that is selling a cloud solution and wants to create a sponsored post on LinkedIn.  The advertiser might use the Audience Engagement API to do research and find that the target audience engages with GDPR articles.  Hence, the advertiser should write about how their cloud solution adheres to GDPR standards, thus increasing engagement.  By design, the Audience Engagement API is secure, aggregated, and uses state of the art differentially private algorithms to provide rigorous privacy guarantees.  The data honors user privacy settings and only contains information that is approved as outlined by GDPR.  Further, data is purged within 30 days of a user leaving the ecosystem because it has a 30 day retention.  

The primary reason to leverage differential privacy is due to \emph{differencing attacks}, where the difference between two queries reveals an individual's content.  For example, one query can be for the top articles by engagement from CEOs in India and then another query asks for the top articles by engagement from CEOs in India or LinkedIn.  This attack makes aggregation and thresholding approaches insufficient --- two results having counts above a threshold does not mean that their difference cannot uniquely identify an individual.  Rather than limiting the scope of the Audience Engagement API by reducing the ways the dataset can be sliced in an ad hoc way, we instead worked to include differentially private algorithms to prevent such differencing attacks.  To make such a product private, it is apparent that one must add noise, to prevent differencing attacks, and limit the number of accesses to the API, to prevent reconstructing the dataset, despite the noise that is added.  Differential privacy then formalizes these approaches via randomized algorithms and its composition properties.  

To incorporate differential privacy, we carefully balance various resources, including data storage distributed across several servers, real-time query computation, privacy loss quantified by the differential privacy parameters $(\diffp,\delta)$, and accuracy.  We describe here the overall privacy system deployed at LinkedIn that balances these resources to provide a product that surfaces audience engagement insights while putting members first by safeguarding their data.  

Providing scalable, real-time analytics with low latency without differential privacy is challenging enough.  Luckily, we have the open source real-time distributed OLAP datastore, called Apache Pinot (incubating) \cite{PinotSigmod}.  Pinot enables use cases like \emph{Job and Publisher Analytics} and \emph{Who Viewed My Profile}.  In order to develop a differentially private system, we need to think how it can be used in conjunction with a (distributed) OLAP system such as Pinot.  This would enable us to have scalable privacy systems.  Furthermore, we need to implement a budgeting tool into the API so that analysts cannot repeatedly query the dataset thus making noise addition pointless.  Our goal is twofold: 
implement differentially private algorithms that can be used with real-time distributed OLAP systems 
and incorporate a privacy budget management service to restrict the amount of information an analyst can retrieve.   
For the privacy budget management service, we incorporate the latest composition bounds for our particular algorithms \cite{DongDuRo19} to extract more utility subject to a given differential privacy budget.

\subsection{Contributions}
We make several contributions toward making practical privacy systems that leverage differential privacy.  
\begin{itemize}
\item We describe a suite of differentially private algorithms that cover the data analytics tasks for LinkedIn's Audience Engagement API, which provide user-level privacy guarantees.
\item We detail our privacy budget management service that is able to track each analyst's privacy budget over multiple data centers.  Hence, we can ensure the budget is enforced across large scale systems in real-time.
\item We showcase empirical results of our algorithms on LinkedIn's data for various privacy parameters on our deployed system.
\item We provide a discussion about the considerations in our privacy system, in particular how we rationalize certain parameters.  We hope that this discussion will help guide practitioners in how parameters might be set and provide transparency into our deployed system.
\end{itemize}

Although the private algorithms and privacy budget formulas were known in prior work, the main contribution of this work is in combining both algorithms and budget management into a system that can easily scale to large datasets and multiple analysts querying the system while applying the privacy system in the Audience Engagement API product at LinkedIn.  In particular, we developed a library of private algorithms and a privacy budget management system separately so that each could scale according to their own requirements; see Section~\ref{sect:Architecture} for more detail.  Further, we propose two units of budget, \emph{information} and \emph{call} budgets, that can be deducted for each analyst depending on each result she receives.  We can then use the latest, state of the art privacy composition formulas that tightly bound the overall privacy loss.  We also state our assumptions for the privacy system in Section~\ref{sect:deployment}, including analysts not colluding and data churn for refreshing privacy budgets.

\subsection{Related Work}
Differential privacy has become the standard privacy benchmark for data analytics on sensitive datasets.  Despite its popularity in the academic literature, the number of actually implemented differential privacy systems is limited, but growing.   Several of the currently implemented systems with differential privacy are in the local model, where data is individually privatized prior to being aggregated on a central server.  The main local differentially private systems include Google's RAPPOR on their Chrome browser \cite{ErlingssonPiKo14}, Apple's iOS and MacOS diagnostics \cite{ApplePrivacy17}, and Microsoft's telemetry data in Windows 10 Fall Creators Update \cite{DingKuYe17}.  

The privacy model we are interested in  for this work is the global privacy setting, where data is already stored centrally, but we want to ensure each result computed on the data is privatized.  In this less restrictive privacy setting, the main industrial differential privacy systems include Microsoft's PINQ \cite{McSherry10}, Uber's FLEX for its internal analytics \cite{JohnsonNeSo18}, LinkedIn's PriPeARL for its ad analytics \cite{KenthapadiTr18}, Google's recent differential privacy open source project \cite{Guevara2019, WilsonZhLaDeSiGi20}, and the 2020 U.S. Census \cite{DajaniLaSiKiReMaGaDaGrKaKiLeScSeViAb17}.  In this work, we present a privacy system that incorporates a privacy budget management service to ensure user-level privacy, whereas LinkedIn's PriPeARL system provides event-level privacy and was focused on providing consistent results, which we also incorporate.  The FLEX system points out that a privacy budget management service can be implemented but does not provide a strategy for how to do it.  Further, our system is part of an API that allows for adaptively chosen queries computed in real-time, which is, to our knowledge, a different model from the future U.S. Census Bureau's system.  

The main difference between the approach recently proposed in \cite{WilsonZhLaDeSiGi20} and this work is that we do not bound user contributions across and within different partitions.\footnote{Note that a caveat in the Google open-source code is that the ``implementation assumes that each user contributes only a single row to each partition.`", \url{https://github.com/google/differential-privacy}}  Such an approach would create a significant bottleneck in processing queries in a real-time system, since each online query can require a pre-processing step over the dataset to bound user contributions.  For Audience Engagement, we are dealing with terabytes of data.  The $\rT{}$ algorithm provides user-level privacy guarantees for count distinct queries without pre-processing, thus handling similar \emph{queries-per-second} (QPS) as without privacy.  See Section~\ref{sect:Tasks} for more detail.  Although Wilson et al. \cite{WilsonZhLaDeSiGi20} do discuss a privacy budget, it does not consider optimized privacy loss bounds for the data analytics tasks we consider.  Our system takes into account the various privacy algorithms to take advantage of the state of the art privacy composition bounds, such as \emph{pay-what-you-get} composition and improved composition bounds for exponential mechanisms \cite{DurfeeRo19, DongDuRo19}.  

There are other open source libraries for differentially private algorithms, such as PrivateSQL \cite{KotsogiannisTaHeFaMaHaMi19} and the recent collaboration project between Harvard's IQSS and Microsoft \cite{Kahan2019}. The former work generates a synthetic dataset, \emph{private synopses}, that is based on all queries that are posed in advance.  Such an approach is very appealing, but would not be feasible in our setting due to the size of the underlying dataset and the set of all possible queries that can be asked by an analyst also being large.  

Another related privacy system is PSI ($\Psi$) from the Harvard Privacy Tools Project \cite{PSI}.  PSI is a private data sharing interface to ``enable researchers in the social sciences and other fields to share and explore privacy-sensitive datasets with the strong privacy protections of differential privacy."  Although they support several commonly used statistics, our system covers the necessary algorithms to privatize queries in the Audience Engagement API.  Further, our system allows for handling highly distributed datasets via Pinot while enforcing a strict privacy budget that is eventually consistent across data centers.


\section{Preliminaries}

We now present some notation and fundamental definitions that will be used to describe our privacy system.  
We will denote the data histogram as $\bbh \in \N^{d}$ where $d$ is the dimension of the data universe, which might be unknown or known.  We say that $\bbh$ and $\bbh'$ are neighbors, sometimes denoted as $\bbh \sim \bbh'$, if they differ in the presence or absence of at most one member's data.   We now define differential privacy \cite{DworkMcNiSm06, DworkKeMcMiNa06}.
\begin{definition}[Differential Privacy]
An algorithm $\cM$ that takes a histogram in $\N^d$ to some arbitrary outcome set $\cY$ is $(\diffp,\delta)$-differentially private (DP) if for all neighbors $\bbh,\bbh'$ and for all outcome sets $S \subseteq \cY$, we have
$
\Pr[\cM(\bbh) \in S] \leq e^\diffp \Pr[\cM(\bbh') \in S] + \delta.
$
If $\delta=0$, then we simply write $\diffp$-DP.
\end{definition}

In our algorithms, we will add noise to the histogram counts.  The noise distributions we consider are from a Gumbel distribution where $\gum(b)$ has PDF $p_\gum(z;b)$ or a Laplace distribution where $\lap(b)$ has PDF $p_\lap(z;b)$, and
\begin{align*}
p_\gum(z;b) & = \frac{1}{b} \cdot e^{- (z/b + e^{-z/b}) } \\
p_\lap(z;b) & = \frac{1}{2b} \cdot e^{- |z | / b}.
\end{align*}

As an analyst interacts with private algorithms, the resulting privacy parameters increase with each returned result.  Hence, we need to account for the overall \emph{privacy budget} that an analyst can exhaust before the privacy loss is deemed to be too large.  We then use the composition property of DP to bound the resulting privacy parameters.  We will use \emph{bounded range} in our composition analysis, which was introduced by Durfee and Rogers \cite{DurfeeRo19}. Note that $\diffp$-BR mechanisms are $\diffp$-DP and $\diffp$-DP mechanisms are $2\diffp$-BR.  

\begin{definition}[Bounded Range]
Given a mechanism $\cM$ that takes a histogram in $\N^d$ to outcome set $\cY$, we say that $\cM$ is $\diffp$-bounded range (BR) if for any $y_1,y_2 \in \cY$ and any neighboring databases $\bbh,\bbh'$ we have
\[
\frac{\Pr[\cM(\bbh) = y_1]}{\Pr[\cM(\bbh') = y_1]} \leq e^{\diffp} \frac{\Pr[\cM(\bbh) = y_2]}{\Pr[\cM(\bbh') = y_2]}
\]
where we use the density function instead for continuous outcomes. 
\end{definition}

We now state the result from Dong et al. \cite{DongDuRo19} that tightens the composition bound from Durfee and Rogers \cite{DurfeeRo19} which itself improved on the more general optimal DP composition bounds \cite{KairouzOhVi17,MurtaghVa16}.

\begin{lemma}\label{lem:compBoundedRange}
Let $\cM_1, \cM_2, \cdots, \cM_t$ each be $\diffp$-BR where the choice of mechanism $\cM_i$ at round $i$ may depend on the previous outcomes of $\cM_1, \cdots, \cM_{i-1}$, then the resulting composed algorithm is $(\diffp'(\delta), \delta)$-DP for any $\delta \geq 0$ where $\diffp'(\delta)$ is the minimum of $t \diffp$ and
\begin{equation}\label{eq:compBetter}
t\left( \frac{\diffp}{1 - e^{-\diffp}} - 1 - \log \left( \frac{\diffp}{1 - e^{-\diffp}} \right) \right)+  \diffp \sqrt{ \frac{t}{2} \log(1/\delta)}. 
\end{equation}
\end{lemma}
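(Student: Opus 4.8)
The plan is to bound the privacy loss random variable of the composed mechanism and then convert a tail bound on it into an $(\diffp'(\delta),\delta)$-DP guarantee. Recall the standard reduction: writing $\cM$ for the composed mechanism and $L$ for its privacy loss on a transcript drawn from $\cM(\bbh)$, if $\Pr[L > \diffp'] \le \delta$ for every ordered pair of neighbors $\bbh \sim \bbh'$, then $\cM$ is $(\diffp',\delta)$-DP, since for any outcome set $S$ we have $\Pr[\cM(\bbh)\in S] \le \Pr[\cM(\bbh)\in S,\, L \le \diffp'] + \Pr[L > \diffp'] \le e^{\diffp'}\Pr[\cM(\bbh')\in S] + \delta$, where the first term is bounded using the pointwise density inequality $p_{\bbh}(y) \le e^{\diffp'} p_{\bbh'}(y)$ valid on $\{L \le \diffp'\}$. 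So it suffices to show $\Pr[L > \diffp'(\delta)] \le \delta$ for the value of $\diffp'(\delta)$ in~\eqref{eq:compBetter}.

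Decompose $L = \sum_{i=1}^{t} L_i$, where $L_i$ is the privacy loss of $\cM_i$ evaluated at its realized output, conditioned on the previous outputs $y_1,\dots,y_{i-1}$ (this is well defined because $\cM_i$ may depend on $y_{<i}$ but, given $y_{<i}$, is a fixed mechanism). Two per-round facts, both consequences of $\cM_i$ being $\diffp$-BR, drive the argument: (i) conditioned on $y_{<i}$, the variable $L_i$ takes values in an interval of length at most $\diffp$ --- this is exactly the range property, since the BR inequality says $\sup_y \ell_i(y) - \inf_y \ell_i(y) \le \diffp$ for the round-$i$ privacy loss function $\ell_i$; and (ii) conditioned on $y_{<i}$, $\mathbb{E}[L_i \mid y_{<i}] \le \mu$, where $\mu := \tfrac{\diffp}{1-e^{-\diffp}} - 1 - \log\!\big(\tfrac{\diffp}{1-e^{-\diffp}}\big)$. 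Granting (i) and (ii): the centered partial sums $\sum_{i\le k}\big(L_i - \mathbb{E}[L_i\mid y_{<i}]\big)$ form a martingale whose increments are, conditionally, supported in an interval of length $\diffp$; Hoeffding's lemma makes each increment conditionally $\tfrac{\diffp^2}{4}$-subgaussian, and an Azuma/Chernoff bound then gives $\Pr\big[L > t\mu + s\big] \le \exp\!\big(-2s^2/(t\diffp^2)\big)$ for all $s \ge 0$. Choosing $s$ so that the right-hand side equals $\delta$ gives $s = \diffp\sqrt{\tfrac{t}{2}\log(1/\delta)}$, hence $\diffp'(\delta) = t\mu + \diffp\sqrt{\tfrac{t}{2}\log(1/\delta)}$, which is precisely~\eqref{eq:compBetter}. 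The competing bound $t\diffp$ follows from basic composition because each $\cM_i$, being $\diffp$-BR, is in particular $\diffp$-DP; we take the minimum of the two.

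The remaining content is the mean bound (ii), which is the analytic core. Fix $y_{<i}$ and neighbors, and let $\ell = \ell_i$ denote the privacy loss function, so the two conditional output densities satisfy $p_{\bbh'}(y) = p_{\bbh}(y)\,e^{-\ell(y)}$ and therefore $\mathbb{E}_{p_{\bbh}}[e^{-\ell}] = \sum_y p_{\bbh'}(y) = 1$. We must maximize $\mathbb{E}_{p_{\bbh}}[\ell] = \sum_y p_{\bbh}(y)\,\ell(y)$ over all choices of the density $p_{\bbh}$ and the function $\ell$ subject to $\sum_y p_{\bbh}(y)=1$, $\mathbb{E}_{p_{\bbh}}[e^{-\ell}] = 1$, and $\ell$ taking values in an interval $[c, c+\diffp]$. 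For fixed $c$ this is a linear program in the law of $\ell$ with a single nontrivial moment constraint, so the maximum is attained at a two-point law, and convexity of $e^{-t}$ forces the two atoms to the interval endpoints $\{c, c+\diffp\}$; putting mass $q$ on $c+\diffp$, the constraint $q e^{-(c+\diffp)} + (1-q)e^{-c} = 1$ pins down $e^{c} = 1 - q(1-e^{-\diffp})$, and maximizing the objective $c + q\diffp$ over $q \in [0,1]$ yields the optimum $q^\star = \tfrac{1}{1-e^{-\diffp}} - \tfrac{1}{\diffp}$ and maximum value exactly $\mu$. The main obstacle is carrying out this extremal optimization cleanly and confirming that the ``mean plus subgaussian tail'' split in the previous paragraph is tight enough to recover~\eqref{eq:compBetter} rather than a weaker bound; the concentration step itself is routine, and one should also tidy up the usual measure-theoretic points (absolute continuity of $p_{\bbh'}$ with respect to $p_{\bbh}$, and that (i)--(ii) hold for every conditional law arising under adaptive choice of the $\cM_i$), none of which changes the structure of the proof.
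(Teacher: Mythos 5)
Your proposal is correct; note that the paper itself does not prove this lemma but imports it from Dong et al.~\cite{DongDuRo19}, and your argument reconstructs essentially that proof: the two-point extremal analysis giving the per-round conditional mean bound $\mu = \tfrac{\diffp}{1-e^{-\diffp}} - 1 - \log\bigl(\tfrac{\diffp}{1-e^{-\diffp}}\bigr)$ for a loss of range $\diffp$ with $\mathbb{E}[e^{-\ell}]=1$, Azuma--Hoeffding on the centered losses with conditionally range-$\diffp$ increments yielding the $\diffp\sqrt{\tfrac{t}{2}\log(1/\delta)}$ term, the standard tail-bound-to-$(\diffp',\delta)$-DP conversion, and the minimum with $t\diffp$ from basic composition of $\diffp$-DP. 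The computations check out (in particular $q^\star = \tfrac{1}{1-e^{-\diffp}} - \tfrac{1}{\diffp}$ and the optimal value equal to $\mu$), so this is the same route as the cited source rather than a genuinely different one.
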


We also can use the more complicated composition bound for BR mechanisms \cite{DongDuRo19}.  However we cannot use the optimal composition bound from \cite{DongDuRo19} because it only applies to the non-adaptive setting.  Here we are interested in the API setting which allows the user to ask adaptive queries, meaning the queries can depend on previous results.


\section{Private Data Analytics}\label{sect:Tasks}

To incorporate differential privacy, we needed to consider the various tasks we want the application to handle.  We will be focusing on data analytics based on histograms or counts over different domain elements.  We will discuss each query type our privacy system handles, but first we need to set up some notation.   

In order to provide a \emph{user-level} privacy guarantee where all data records of a user are protected, as opposed to \emph{event-level} where only an individual data record is protected, we consider two types of queries.  The first consists of \emph{distinct count} queries where a member can contribute a count of at most 1 to any number of elements, i.e. $||\bbh - \bbh'||_\infty \leq 1$ for any neighbors $\bbh,\bbh'$ ($\ell_\infty$-sensitivity).  An example of such a query would be ``what are the top-$k$ articles that are shared among distinct members with a certain skill set?"  The second type is \emph{non-distinct count} queries where a member can increase the count of any element by at most $\tau\geq 1$, i.e. $||\bbh - \bbh' ||_\infty \leq \tau$ for any neighbors $\bbh, \bbh'$.
Note that $\tau = 1$ gives us the distinct count setting and $\tau$ can be a parameter for each non-distinct count query.

In either case, distinct count or non-distinct count queries, a member can either affect the count of an arbitrary number of elements $||\bbh - \bbh'||_0 \leq d$ for any neighbors $\bbh,\bbh' \in \N^d$ or a bounded number $||\bbh - \bbh'||_0 \leq \Delta$ for $\Delta < d$ ($\ell_0$-sensitivity).  We separate these two cases as the \emph{unrestricted} sensitivity and $\Delta$-restricted sensitivity settings, respectively.  In the case of unrestricted sensitivity, we will return only a fixed number of counts, say the top-$k$, in order to bound the privacy loss.

Scaling our privacy system across several analysts with queries that require data from multiple servers requires algorithms that can run efficiently with runtime that does not scale with the entire data domain size $d$.  For example, for the top-10 articles engaged with by staff software engineers, we do not want to query over all articles, since there could potentially be billions of articles and would be computationally expensive and slow.  For this reason, we distinguish the case where the data domain is reasonably sized and known, i.e. \emph{known domain}, from when the data domain is very large or unknown, i.e. \emph{unknown domain}. 

We then summarize in Table~\ref{table:tasks} the set of queries that we want our privacy system to handle into \emph{unrestricted sensitivity} or $\Delta$-\emph{restricted sensitivity} as well as \emph{known domain} or \emph{unknown domain} with the corresponding algorithms we will use for each setting.  Recall that we can interpolate between distinct count queries and non-distinct count queries with the $\tau\geq 1$ parameter, so we include $\tau$ as a parameter to each of our algorithms.  Furthermore, each algorithm takes a privacy parameter $\diffpper$.
\begin{table}[htbp]
\centering\setcellgapes{4pt}\makegapedcells
\begin{tabular}{ |c|c|c| } 
 \hline
 & $\Delta$-restricted sensitivity & unrestricted sensitivity \\ 
 \hline
 \shortstack{Known \\ Domain} & $\knownLap{\Delta,\tau}$ \cite{DworkMcNiSm06} & $\knownEM{k,\tau}$ \cite{McSherryTa07} \\ 
 \hline
\shortstack{Unknown \\ Domain} & $\rTE{\Delta,\bar{d},\tau}$ & $\rT{k,\bar{d},\tau}$ \\ 
 \hline
\end{tabular}
\caption{DP algorithms for various data analytics tasks\label{table:tasks}}
\end{table}

Restricting the $\ell_\infty$-sensitivity is not part of the algorithm, rather it is done by using distinct count queries ($\tau =1$), knowing a bound a priori, or done via a preprocessing step on the data.  For the unknown domain setting, we require a parameter $\bar{d}$ which tells us how many elements from the original dataset that we can access in our algorithms.  We think of $\bar{d}$ as the maximum number of elements our OLAP system can return without causing significant latency.  For the unrestricted sensitivity setting, we require our algorithms to return at most $k$ elements, such as the top-$k$.  This is due to the fact that a user can change the counts of an arbitrary number of elements.  In such cases, to have any hope to bound the privacy loss, we bound the number of elements that can be returned.  In Table~\ref{table:tasks}, we refer to the following mechanisms: the standard Laplace mechanism from \cite{DworkMcNiSm06} is denoted as $\knownLap{\Delta,\tau}$, which adds Laplace noise with scale proportional to $\tau/\diffpper$ to each count; the $k$-peeling exponential mechanism \cite{McSherryTa07}, which adds Gumbel noise with scale proportional to $\tau/\diffpper$ to each count and returns the elements with the top-$k$ noisy counts, which we denote as $\knownEM{k,\tau}$;  the generalized restricted sensitivity algorithm from \cite{DurfeeRo19} denoted as $\rTE{\Delta,\bar{d},\tau}$, which is presented in Algorithm~\ref{algo:genLimitDomLap}; the generalized unrestricted sensitivity algorithm from \cite{DurfeeRo19} denoted as $\rT{k,\bar{d},\tau}$, which is presented in Algorithm~\ref{algo:genLimitDom}.   

The primary difference between querying the OLAP datastore for results with privacy as opposed to without privacy is that when querying for the top-$k$ in the unknown domain setting, we instead fetch the top-$\bar{d}$ and then use $\rT{k,\bar{d},\tau}$ or $\rTE{\Delta,\bar{d},\tau}$.  Ideally, we would want to set $\bar{d} = d$ to get the full dataset, but that is not practical when the number of elements is large and the existing architecture potentially trims the results for efficiency.  The choice of algorithm for each query can be a simple look up of the \emph{group by} clause where if no additional information is given, then we default to the unknown domain and unrestricted sensitivity.


\section{Privacy System Architecture\label{sect:Architecture}}
Existing OLAP datastores are designed to provide real-time data analytics over distributed datasets, with differential privacy not necessarily being incorporated from the beginning.  Pinot is the analytics platform of choice at LinkedIn for site-facing use cases.   In this section, we detail how we incorporated differential privacy with Pinot and the application. Figure~\ref{fig:PrivSystemOverview} presents the overall system.

\begin{figure}[h]
\centering
\includegraphics[width=\columnwidth]{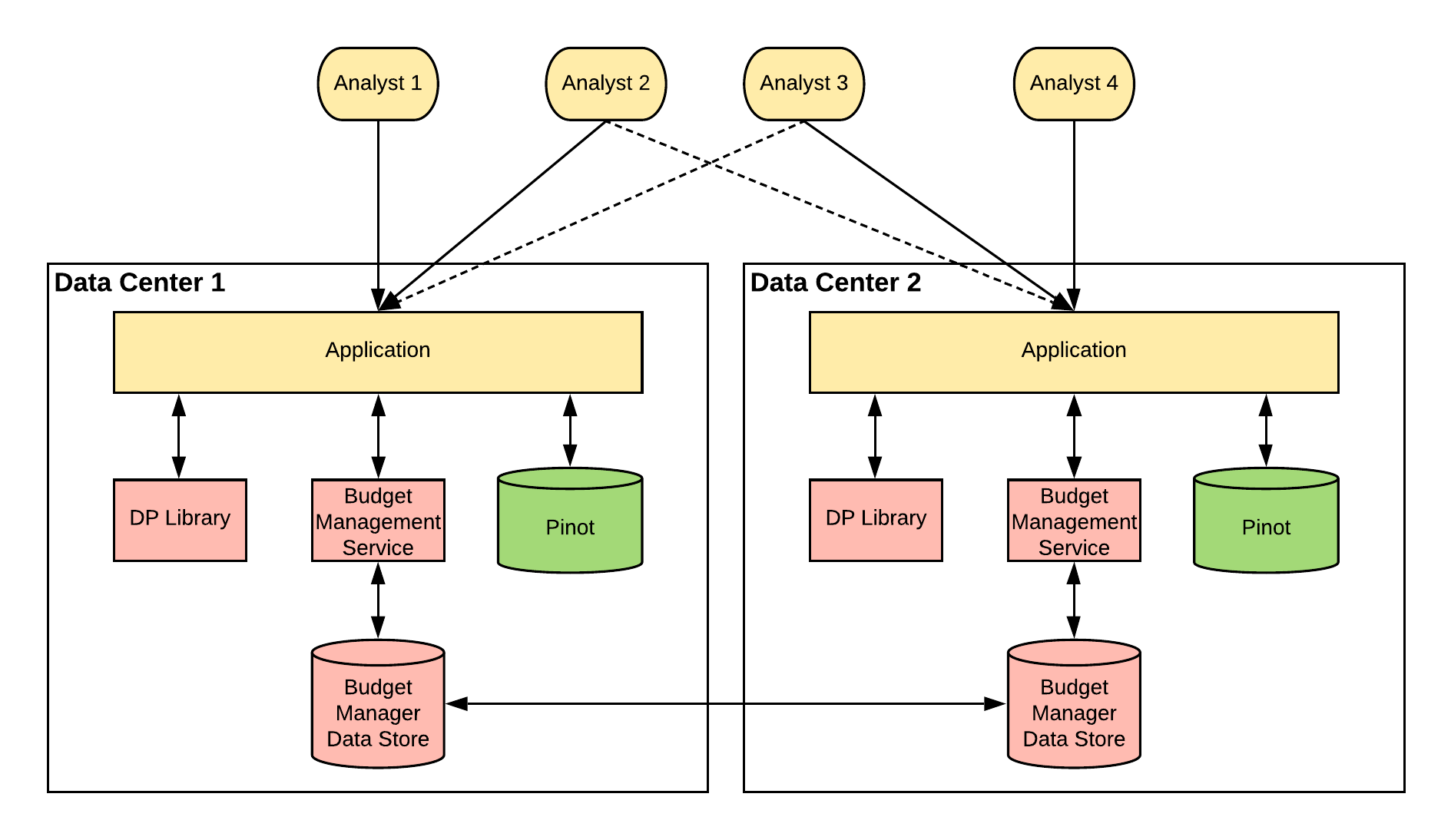}
\caption{The overall privacy system with additional components for DP being the \emph{DP Library} as well as the \emph{Budget Management Service} and \emph{Data Store}. The arrows between \emph{Analysts} and \emph{Data Centers} show that an analyst may be initially assigned one data center (bold) but can migrate to a different one (dashed).\label{fig:PrivSystemOverview}}
\end{figure}

The application entity, based on the request received from the analyst, generates queries to the underlying database. The queries typically ask for a histogram grouped by some column. In order to apply the right algorithm for the query, the application needs to know the sensitivity and domain setting of the column as shown in Table~\ref{table:tasks}. Also, the query is to be modified to fetch a potentially larger number of rows from the database.

We designed generic interfaces that are implemented by a suite of algorithms. The interfaces allow the application to:
\begin{itemize}
\setlength\itemsep{0em}
\item Retrieve modified query parameters (e.g. change $k$ to $\bar{d}$).
\item Estimate privacy cost of the query (e.g. return $\Delta$ or $k$).
\item Add noise to results based on configured parameters.
\item Compute the actual cost of the query, e.g. the number of items returned in the unrestricted sensitivity setting.
\end{itemize}

The application can independently invoke budget management functions, such as the following:

\begin{itemize}
\setlength\itemsep{0em}
\item Getting the available budget for an analyst to verify whether a query can even start to execute.
\item Depleting the available budget with the executed query.
\end{itemize}
The cost of a query could be multi-dimensional, including the cost of making the call and of information retrieved (see Section~\ref{sect:Budget}). 

Given the query from the analyst and the selected DP algorithm, the application will then interact with the DP library.  It will first determine the expected cost of the resulting query to show the application, which is a function of the query that is asked and the selected algorithm.   The application then calls the DP library to translate the query to a DP version that will be used to query Pinot.  For example, if the query is for top-$k$ and the algorithm is in the unknown domain setting, then the translation could simply modify $k$ to $2k$, in which case $\bar{d} = 2k$ in $\rTE{\Delta,\bar{d},\tau}$ and $\rT{k,\bar{d},\tau}$.  On the other hand, if the query is over the known domain setting, then we will want to translate $k$ to $d$ in order to get counts over the full domain, including elements with zero counts.

Now that the application has the modified query from the DP library, we need to check whether there is enough budget remaining from the budget management service for the query to be evaluated, which are updated parameters $(\kmax,\ellmax)$ that decrease from some fixed values.  We typically call the $\kmax$ parameter the \emph{information budget} and can be thought of as the amount of the $\diffp$ parameter in DP we are consuming.  Additionally, we refer to $\ellmax$ as the \emph{call budget} and is associated with the $\delta$ parameter in DP.  We assume that each analyst will have its own budget and each analyst starts with the same budget.  If the budget is exhausted for an analyst then the budget management service does not allow the query to be executed and tells the application that the analyst has exhausted their entire budget.  If the budget is not depleted, yet what remains is less than the expected cost of the query, then we still do not evaluate the query.

Once the budget management service allows for the query to be evaluated, the application queries Pinot as it would have without the privacy system only now with the translated query.  The Pinot result is then returned to the application and then the application makes another call to the DP library with the Pinot result.  The DP library will then run the corresponding DP algorithm on the Pinot result and return the DP result.  Based on the DP result, the budget management service updates the parameters $\kmax,\ellmax$, as described in Algorithm~\ref{algo:Budget}, and returns the result to the application.  

We built the algorithms module and the budget management module to be independent of each other for the following reasons:
\begin{itemize}
\setlength\itemsep{0em}
\item While DP algorithms are running on the application layer, budget management operations require a remote call to a distributed system because the budget management service needs to provide a consistent view to all application instances. Therefore, keeping the budget management independent of algorithms will allow us to scale them independently. The algorithms will need to scale to minimize memory and CPU usage, whereas the budget management service will need to scale in terms of handling higher query-per-second (QPS), while minimizing latency.
\item We require that newer (as yet unknown) algorithms still be able to use and manage budgets. 
\item Multiple implementations of the budget manager are possible depending on system requirements. We need to be able to iterate on these independently and quickly.
\item The algorithms need not (and do not) know about the analyst that is querying, and the budget manager does not behave differently depending on the type of query or algorithms used. As long as they both have a common notion of the units $(\kmax,\ellmax)$ and dimensions of cost, it makes sense to keep these independent of each other.
\end{itemize}

\subsection{Pinot: a distributed OLAP datastore}
Pinot \cite{PinotSigmod} is a distributed, real-time, columnar OLAP data store, currently incubating in Apache. At LinkedIn, we have two main categories of analytics applications: internal applications (such as dashboards, anomaly detection platform, A/B testing, etc.)  and site-facing applications (such as \emph{Who viewed my profile},  Talent Insights, etc.). Internal dashboards need to process a large volume of data (trillions of records), but can tolerate latencies in hundreds of milliseconds. They also have a relatively low query volume. The site-facing applications, on the other hand, serve hundreds of millions of LinkedIn members, and therefore have a very high query volume with a latency budget of a few to perhaps tens of milliseconds. 

Pinot has a flexible architecture and supports a wide variety of applications in the spectrum. Pinot production clusters at LinkedIn are serving tens of thousands queries per second, supporting more than 50 analytical use cases, and ingesting over millions of records per second.
Other companies such as Uber, Microsoft, and Weibo are also operating production Pinot clusters.

\begin{figure}[h]
\centering
\includegraphics[width=\columnwidth]{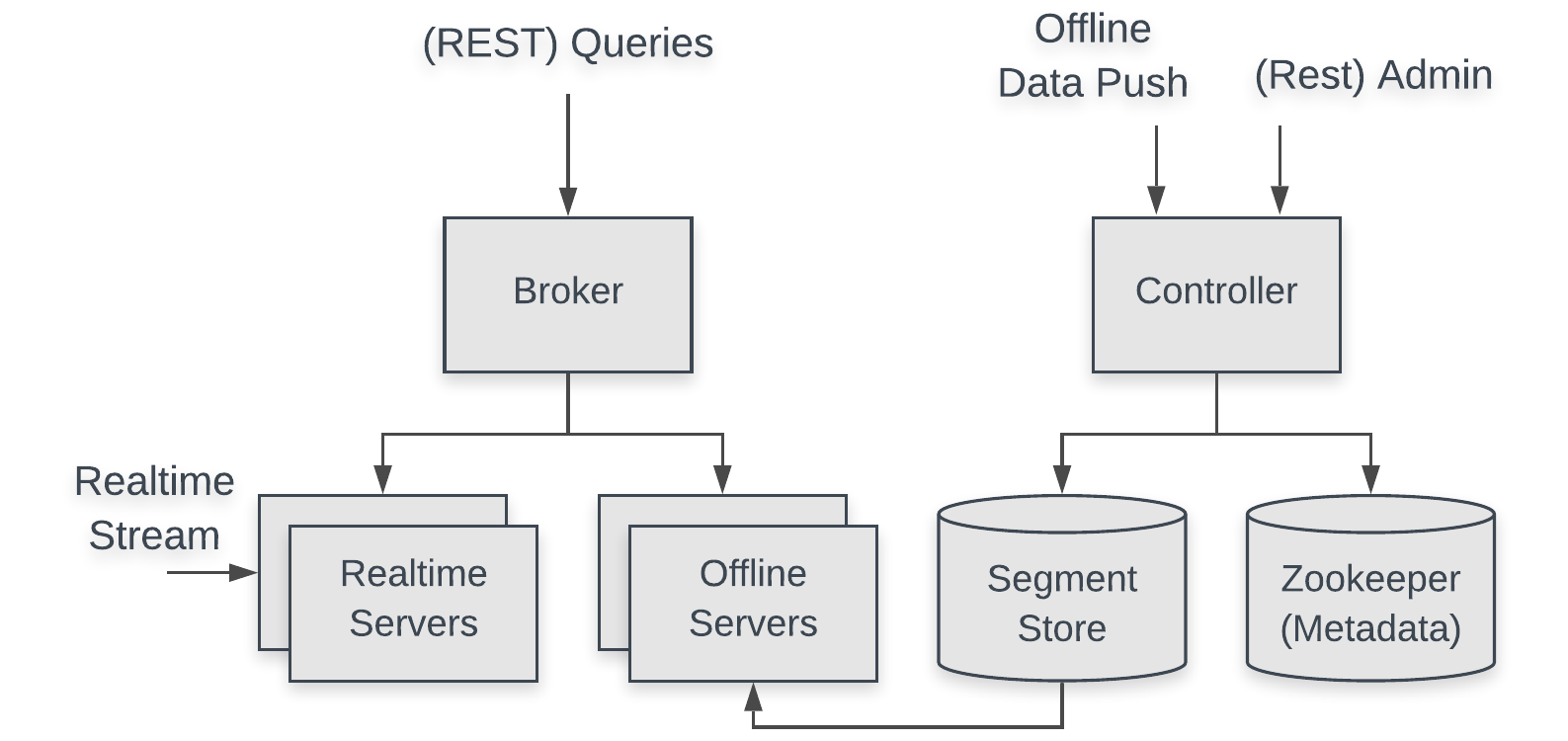}
\caption{Overall Pinot Architecture. \label{fig:PinotSystemOverview}}
\end{figure}

As shown in Figure~\ref{fig:PinotSystemOverview}, Pinot has three different components: controller, broker, and server. Controllers handle cluster wide coordination, run periodic tasks for cluster state validation and retention management, and provide a REST API for managing cluster metadata. Brokers receive queries and federate them to servers so as to cover all the segments (shards) of a table.  Servers execute the query on the segments.  Offline servers host segments that are batch ingested while real-time servers host the segments that are ingested from streaming sources, such as Kafka \cite{Kafka}.  

For the privacy system at LinkedIn, we naturally decided to use Pinot as an OLAP data store because Pinot already supported a lot of customer-facing analytics applications like Audience Engagement. However, it is noteworthy that our architecture keeps the the budget management service and Pinot as separate components so that we can easily provide DP features to other analytical query engines such as Presto and Spark SQL.

\subsection{Key-value Based Budget Management System in Espresso}

We now describe our key-value based budget management system.  We create one key per analyst of a table (or per use case which may have multiple tables), and the data against the key is atomically changed when we need to update the budget. The store needs to provide ways to do the read-modify-write operations, and the latency should be relatively low.

The value record will contain the following items:
\begin{itemize}
\item Maximum budget allowed for the user.
\item The time period over which this budget is allowed (typically the time period during which the data is refreshed completely).
\item The total budget used so far (or, that remain).
\item The timestamp when the used budget was reset to 0 (e.g. for a monthly refresh, this will be the 1st of the month).
\end{itemize}

There are three methods that the budget manager needs to support:
\begin{itemize}
\item To check whether an analyst's ID has enough budget to run a query that will consume at most a given cost, we use $\texttt{checkBudget}(\text{ID}, \text{cost} )$, which returns either true or false.
\item To deduct an analyst's budget, with a given ID, after getting a DP result, with a given cost, we use $\texttt{updateBudget}(\text{ID}, \text{cost})$.
\item To get the current budget of an analyst, with a given ID, we use $\texttt{getBudget}(\text{ID})$, which returns either the analyst's current budget that has been used or the maximum budget allowed if there has been a budget refresh.
\end{itemize}

We use an \emph{Espresso} \cite{Espresso13} key-value store to manage the budget. The read requests should be fast, given it is only a primary key lookup. So, a call to get the current usage (currently coming in at an unknown rate) can be fast.  Espresso was chosen due to several reasons: eventual consistency in cross-datacenter replication to ensure an analyst does not exceed a given budget, capability to scale to millions of users while still keeping a fairly constant response time, control over the refresh time period, and flexibility to change per-analyst maximum upon demand.


\section{Differentially Private Algorithms}

We detail the algorithms for the various tasks in Table~\ref{table:tasks}.  These algorithms consist of previous work from \cite{DworkMcNiSm06}, \cite{McSherryTa07}, and \cite{DurfeeRo19}, or slightly modified forms.  Each algorithm takes a $\diffpper$ privacy parameter, which determines the amount of noise to add, while each algorithm in the unknown domain setting has an additional $\delta >0$ privacy parameter.  We point out that $\rT{k,\bar{d},\tau}$ is the default algorithm to use when no other information is known.  However, the benefit of knowing the domain is that when $k$ results are requested, $k$ results will be returned each time, whereas the unknown domain setting may return fewer than $k$.  The benefit of the $\Delta$-restricted sensitivity setting is that the budget depletes by only $\Delta$, rather than by the number of elements returned, from  as in the unrestricted setting.

\subsection{Known Domain Algorithms}
We will now state the well known Laplace \cite{DworkMcNiSm06} and Exponential \cite{McSherryTa07} mechanisms. 
We present the Laplace mechanism  \cite{DworkMcNiSm06} in the context of histogram data with the assumption that the $\ell_\infty$-sensitivity between any neighbors is bounded by $\tau$.  Note that we will use a slightly different scale of noise in procedure $\knownLap{\Delta,\tau}$ in Algorithm~\ref{algo:LapMech} than is traditionally used.  This is because we want to compose bounded range algorithms in our privacy budget manager, where each algorithm has the same parameter $\diffpper$.  We go into more detail of the privacy budget service in Section~\ref{sect:Budget}.  

\begin{algorithm}[h!]
	\caption{$\knownLap{\Delta,\tau}$; Laplace mechanism over known domain with $\ell_\infty$-sensitivity $\tau$, and $\Delta$-restricted sensitivity}
	\begin{algorithmic} 
		\State \textbf{Input:} Histogram $\bbh$, $\Delta$ sensitivity, along with parameter $\diffpper$.
		\State \textbf{Output:} Noisy histogram.
		\For{$i \in [d]$}
			\State $v_i = h_i + \lap(2\tau/\diffpper)$
		\EndFor
		\State Return $\{v_1, \cdots, v_d \}$	
	\end{algorithmic}\label{algo:LapMech}
\end{algorithm}

We now discuss the Exponential Mechanism \cite{McSherryTa07} in full generality and use the \emph{range} of a quality score rather than the global sensitivity of the score, as was presented in \cite{DongDuRo19}.

\begin{definition}[Exponential Mechanism] The \emph{Exponential Mechanism} $M_q: \cX \to \cY$ with quality score $q: \cX \times \cY \to \R$ is written as $M_q(x)$, which samples $y$ with probability proportional to $\exp\left( \diffpper q(x,y) / S_q \right)$ where,
\begin{align*}
& S_q\defeq 
\\
& \sup_{x \sim x'} \left\{  \max_{y\in \cY} \{ q(x,y) - q(x',y) \} - \min_{y'\in \cY} \{q(x,y') - q(x',y') \}\right\}.
\end{align*}
\end{definition}
Note that the Exponential Mechanism is equivalent to adding Gumbel noise $\gum(S_q/\diffpper)$ to $q(x,y)$ for each $y \in \cY$ and reporting the largest noisy counts \cite{DurfeeRo19}.
We then have the following result from \cite{McSherryTa07, DongDuRo19}
\begin{lemma}
The Exponential Mechanism is $\diffpper$-BR and, hence $\diffpper$-DP.  
\end{lemma}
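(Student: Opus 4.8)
\section*{Proof proposal}

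The plan is to verify the bounded range condition directly from the form of the Exponential Mechanism's output distribution, and then invoke the fact, already noted in the excerpt, that every $\diffpper$-BR mechanism is $\diffpper$-DP. Fix a pair of neighbors $x \sim x'$ and two outcomes $y_1, y_2 \in \cY$, and write $Z(x) = \sum_{y \in \cY} \exp\!\big(\diffpper q(x,y)/S_q\big)$ for the normalizing constant of $M_q(x)$ (an integral over $\cY$ in the continuous case), so that the probability (density) that $M_q(x)$ outputs $y$ equals $\exp(\diffpper q(x,y)/S_q)/Z(x)$.

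The first step is to form the ratio of the two likelihood ratios and observe that the normalizers cancel. Setting $a := q(x,y_1) - q(x',y_1)$ and $b := q(x,y_2) - q(x',y_2)$, the factor $Z(x')/Z(x)$ appears identically in the $y_1$ and $y_2$ likelihood ratios, so
\[
\frac{\Pr[M_q(x) = y_1]}{\Pr[M_q(x') = y_1]}\cdot\frac{\Pr[M_q(x') = y_2]}{\Pr[M_q(x) = y_2]} \;=\; \exp\!\left(\frac{\diffpper}{S_q}\,(a-b)\right).
\]
Hence the bounded range inequality $\tfrac{\Pr[M_q(x)=y_1]}{\Pr[M_q(x')=y_1]} \le e^{\diffpper}\,\tfrac{\Pr[M_q(x)=y_2]}{\Pr[M_q(x')=y_2]}$ is equivalent to the single scalar inequality $a - b \le S_q$.

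The second step is to bound $a - b$ by $S_q$ using the definition of the range. Since $a = q(x,y_1) - q(x',y_1) \le \max_{y \in \cY}\{q(x,y) - q(x',y)\}$ and $b = q(x,y_2) - q(x',y_2) \ge \min_{y' \in \cY}\{q(x,y') - q(x',y')\}$, we get
\[
a - b \;\le\; \max_{y\in\cY}\{q(x,y) - q(x',y)\} - \min_{y'\in\cY}\{q(x,y') - q(x',y')\} \;\le\; S_q,
\]
the last step because $S_q$ is the supremum of exactly this max-minus-min over all neighboring pairs. This establishes that $M_q$ is $\diffpper$-BR, and $\diffpper$-DP then follows from the remark preceding the definition of bounded range (or, directly, by multiplying both sides of the bounded range inequality by $\Pr[M_q(x')=y_2]$ and summing/integrating over $y_2$, which collapses both sides to yield $\Pr[M_q(x)=y_1]/\Pr[M_q(x')=y_1] \le e^{\diffpper}$).

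There is no genuine obstacle. The one point worth emphasizing is the cancellation of $Z(x')/Z(x)$: it is exactly what makes the bounded range statement cleaner than a direct two-sided DP argument, in which one must additionally control the ratio of normalizers. A minor bookkeeping remark is that $S_q$ is defined via a supremum over all neighboring pairs whereas we only need $a - b \le S_q$ for the fixed pair $x \sim x'$; this discrepancy can only help us, since the fixed-pair max-minus-min is at most the supremum.
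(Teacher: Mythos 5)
Your proof is correct: the cancellation of the normalizing constants in the ratio of likelihood ratios, followed by bounding the max-minus-min of the quality-score differences by $S_q$, is exactly the standard argument for this lemma, and the paper itself gives no proof — it simply cites McSherry--Talwar and Dong et al., whose proof proceeds the same way. Your closing remark (summing the BR inequality over $y_2$ to recover $\diffpper$-DP) also matches the fact the paper records just before the BR definition, so nothing is missing.
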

In our case, the quality score will simply be the heights of the histogram.  Note that we have only discussed the Exponential Mechanism to return a single element.  In the case where we want to return $k$-elements, we can iteratively apply the Exponential Mechanism by removing the element that is returned in each round and then run the Exponential Mechanism again without the previously returned elements, also known as \emph{peeling}.  However, we can implement this more efficiently by adding Gumbel noise to all the counts and then releasing the top-$k$ elements in a single shot \cite{DurfeeRo19}.  However, we need to also include counts, so we add independent Laplace noise to the counts of the elements in the noisy top-$k$.  We then formally present the $\knownEM{k,\tau}$ procedure in Algorithm~\ref{algo:EM}.

\begin{algorithm}[h!]
	\caption{$\knownEM{k, \tau}$; Exponential Mechanism over known domain with $\ell_\infty$-sensitivity $\tau$ and unrestricted sensitivity}
	\begin{algorithmic} 
		\State \textbf{Input:} Histogram $\bbh$, number of outcomes $k$, and parameter $\diffpper$.
		\State \textbf{Output:} Ordered set of indices and counts.
		\For{$i \in [d]$}
			\State $v_i = h_i + \gum(\tau/\diffpper)$
		\EndFor
		\State Sort $\{v_i \}$ where $v_{i_{1}} \geq \cdots \geq v_{i_{d}}$
		\State $\{Z_i \}_{i=1}^k \stackrel{i.i.d.}{\sim} \lap(2\tau/\diffpper)$
		\State Return $\left\{ (i_{1}, h_{i_{1}} + Z_1), \cdots,(i_{k}, h_{i_{k}} + Z_k )\right\}$	
	\end{algorithmic}\label{algo:EM}
\end{algorithm}

We then have the following result which follows from Dwork et al. \cite{DworkMcNiSm06}, as well as from McSherry and Talwar \cite{McSherryTa07, DongDuRo19}.
\begin{lemma}
Assume that $||\bbh - \bbh'||_\infty \leq \tau$ and $|| \bbh - \bbh'||_0 \leq \Delta$ for any neighbors $\bbh,\bbh'$. The procedure $\knownLap{\Delta,\tau}$ is $\Delta\diffpper/2$-DP and $\Delta\diffpper$-BR.  Further, if $\Delta$ is large or unknown then $\knownEM{k,\tau}$ is $3k \diffpper/2$-DP and $2k\diffpper$-BR.
\end{lemma}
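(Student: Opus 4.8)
The plan is to derive all four claims from the noise‑scale/sensitivity bookkeeping together with the facts already in hand: the Exponential Mechanism is $\diffpper$-BR, ``$\diffp$-BR $\Rightarrow\diffp$-DP'', ``$\diffp$-DP $\Rightarrow 2\diffp$-BR'', the standard Laplace‑mechanism guarantee \cite{DworkMcNiSm06}, and adaptive composition of DP. For $\knownLap{\Delta,\tau}$ I would first bound the $\ell_1$-sensitivity: from $\|\bbh-\bbh'\|_\infty\le\tau$ and $\|\bbh-\bbh'\|_0\le\Delta$ one gets $\|\bbh-\bbh'\|_1\le\Delta\tau$. Since the procedure adds independent $\lap(2\tau/\diffpper)$ noise to every coordinate, the Laplace‑mechanism analysis gives $\Delta\tau/(2\tau/\diffpper)=\Delta\diffpper/2$-DP, and the conversion ``$\diffp$-DP $\Rightarrow 2\diffp$-BR'' then gives $\Delta\diffpper$-BR.

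For $\knownEM{k,\tau}$ I would view the procedure as two adaptively composed stages. Stage~(a) adds $\gum(\tau/\diffpper)$ to every count and returns the ordered indices of the $k$ largest noisy counts; by the Gumbel‑noise characterization this is exactly the $k$-round peeling of the Exponential Mechanism with the histogram heights as quality score. Here one checks that this quality score has range at most $\tau$: a neighboring change inserts or deletes a single member, so each coordinate moves within $[0,\tau]$ (or within $[-\tau,0]$), hence $\max_y\{q(\bbh,y)-q(\bbh',y)\}-\min_{y'}\{q(\bbh,y')-q(\bbh',y')\}\le\tau$; thus the Gumbel scale $\tau/\diffpper$ is the correct one, each round is $\diffpper$-BR (and so $\diffpper$-DP), and this persists through peeling since the restricted histogram still has $\ell_\infty$-sensitivity $\tau$. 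Stage~(b), conditioned on the index set $i_1,\dots,i_k$ output by stage~(a), releases $h_{i_1}+Z_1,\dots,h_{i_k}+Z_k$ with i.i.d.\ $Z_j\sim\lap(2\tau/\diffpper)$; for \emph{any} fixed index set the vector $(h_{i_1},\dots,h_{i_k})$ has $\ell_1$-sensitivity at most $k\tau$ (each entry moves by at most $\tau$ and, in the unrestricted regime, all $k$ may move simultaneously), so stage~(b) is $k\tau/(2\tau/\diffpper)=k\diffpper/2$-DP under every conditioning.

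To assemble the DP bound, stage~(a) is a composition of $k$ mechanisms each $\diffpper$-DP, hence $k\diffpper$-DP, and adaptive composition with the $k\diffpper/2$-DP stage~(b) yields $3k\diffpper/2$-DP. For the BR bound I would \emph{not} pass through the DP bound, since that would only give $3k\diffpper$-BR; instead, stage~(a), being the peeling of $\diffpper$-BR Exponential Mechanisms, is $k\diffpper$-BR by the ``pay‑what‑you‑get'' analysis \cite{DurfeeRo19}, and composing a $\diffp_1$-BR stage with a subsequent $\diffp_2$-DP stage yields a $(\diffp_1+2\diffp_2)$-BR mechanism — the first stage's log‑likelihood ratio lies in a band of width $\diffp_1$ and the second can shift it by at most $\diffp_2$ in either direction — so with $\diffp_1=k\diffpper$ and $\diffp_2=k\diffpper/2$ we obtain $2k\diffpper$-BR. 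The hypothesis ``$\Delta$ large or unknown'' merely records the regime in which $\knownEM{k,\tau}$ is the chosen algorithm and is not used in the bound.

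The routine part is the sensitivity bookkeeping for the two noise families. The delicate steps I expect to be the main obstacle are: (1) establishing the tight $k\diffpper$-BR bound for the peeling Exponential Mechanism, since the easy route ``$k\diffpper$-DP $\Rightarrow 2k\diffpper$-BR'' would, once stage~(b) is added, overshoot the claimed $2k\diffpper$-BR; and (2) handling the adaptivity correctly — verifying that the $\ell_1$-sensitivity bound $k\tau$ for stage~(b) holds uniformly over every index set stage~(a) can produce, and that a BR stage followed by a DP stage inflates the range by only twice the DP parameter. Everything else is direct substitution into the quoted Laplace, Exponential‑Mechanism, and BR/DP conversion facts.
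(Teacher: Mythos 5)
Your sensitivity bookkeeping and the DP claims are right, and they match what the paper intends: the paper gives no standalone proof of this lemma (it is asserted from the cited works), but its implicit argument—made explicit later in the proof of Theorem~\ref{thm:final_parameters}—is exactly your decomposition: for $\knownLap{\Delta,\tau}$, only $\Delta$ coordinates can change, each Laplace release with scale $2\tau/\diffpper$ is $\diffpper/2$-DP hence $\diffpper$-BR, giving $\Delta\diffpper/2$-DP and $\Delta\diffpper$-BR; for $\knownEM{k,\tau}$, the Gumbel top-$k$ is the $k$-fold peeling of a $\diffpper$-BR Exponential Mechanism (your range-$\tau$ check using add/remove neighbors is the right justification for the scale $\tau/\diffpper$), and the $k$ Laplace count releases are each $\diffpper/2$-DP hence $\diffpper$-BR, giving $k\diffpper+k\diffpper/2=3k\diffpper/2$ for DP and $k\diffpper+k\diffpper=2k\diffpper$ for BR. Your Laplace-side shortcut ($\ell_1$-sensitivity $\Delta\tau$, then the stated fact that $\diffp$-DP implies $2\diffp$-BR) is a legitimate alternative that lands on the same numbers.

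The one genuine gap is your route to the $2k\diffpper$-BR claim for $\knownEM{k,\tau}$. You correctly refuse the lossy conversion from $3k\diffpper/2$-DP, but your replacement rests on the assertion that stage~(a), the $k$-round peeling EM, is $k\diffpper$-BR \emph{as a single mechanism}, credited to the pay-what-you-get analysis. That is not what \cite{DurfeeRo19, DongDuRo19} provide: their results bound the DP parameters of an \emph{adaptive composition} of $\diffpper$-BR rounds, and adaptive composition of BR mechanisms is not additive in the BR parameter in general—each round's privacy-loss band has width $\diffpper$, but its \emph{location} can shift with the prefix, so the composed loss need not lie in a band of width $k\diffpper$ (a two-round example with bands $[0,\diffpper]$ and $[-\diffpper,0]$ chosen adaptively already has width $3\diffpper$). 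So your composition rule ``$\diffp_1$-BR followed by $\diffp_2$-DP gives $(\diffp_1+2\diffp_2)$-BR'' is fine, but its hypothesis for stage~(a) is unproven. The paper does not need (and does not prove) the literal composed-mechanism BR statement either: everything downstream—Lemma~\ref{lem:compBoundedRange}, Theorem~\ref{thm:final_parameters}, and the budget deduction of $2k_i$—only uses that $\knownEM{k,\tau}$ decomposes into $2k$ adaptively composed $\diffpper$-BR building blocks ($k$ Gumbel/EM rounds plus $k$ Laplace counts), so the ``$2k\diffpper$-BR'' in the lemma is best read as that additive bookkeeping. If you want a literal BR bound for the composed mechanism, the safe statement obtainable from the stated facts is $3k\diffpper$-BR (via $3k\diffpper/2$-DP); matching $2k\diffpper$ would require actually proving the peeling EM is $k\diffpper$-BR, which you flag as delicate but do not supply.
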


\subsection{Unknown Domain with $\Delta$-Restricted Sensitivity\label{sect:restricted_sensitivity}}
\begin{algorithm}[H]
	\caption{$\rTE{\Delta,\bar{d},\tau}$; Laplace mechanism over unknown domain with access to $\bar{d} +1 > \Delta$ elements, $\ell_\infty$-sensitivity $\tau$, and $\Delta$-restricted sensitivity.}
	\begin{algorithmic} 
		\State \textbf{Input:} Histogram $\bbh$, $\Delta$ sensitivity, cut off at $\bar{d} + 1$, and $\diffpper,\delta$.
		\State \textbf{Output:} Ordered set of indices and counts.
		\State Solve for $\hat{\delta}$: 
		$
		\delta = \hat\delta/4 \cdot (e^{\diffpper/2} + 1)(3 + \ln(\Delta/\hat\delta))
		$
		\State Sort $h_{(1)} \geq h_{(2)} \geq \cdots \geq h_{(\bar{d} +1 )}$.
		\State  $v_\bot = h_{(\bar{d} + 1)}+ \tau \cdot (1 + 2\Delta\ln(\Delta/\hat\delta)/\diffpper) + \lap(2\tau\Delta/\diffpper)$
		\For {$i \leq \bar{d}$}
        \State Set $v_i = h_{(i)} + \lap(2\tau\Delta/\diffpper)$
		\EndFor
        \State Sort $\{v_i\} \cup v_\bot$
        \State Let $v_{i_{1}},....,v_{i_{j}}$ be the sorted list until $v_\bot$
		\State Return $\{(i_{1},v_{i_{1}}),....,(i_{j},v_{i_{j}) }, (\bot,v_\bot) \}$.		
	\end{algorithmic}\label{algo:genLimitDomLap}
\end{algorithm}
For our unknown domain algorithms, we introduce a $\bot$ character to denote a null element that is not part of the domain and whose count is a noisy threshold where no element with smaller noisy count is returned.  We present the $\rTE{\Delta,\bar{d},\tau}$ procedure in Algorithm~\ref{algo:genLimitDomLap} in a more general form than in \cite{DurfeeRo19}, which only considered the distinct count case, i.e. $\tau = 1$.  Further, the proof of privacy remains true if we release the counts as well as the indices. For completeness, the proof of the following result is presented in the appendix.  
\begin{lemma}[Durfee and Rogers \cite{DurfeeRo19}]\label{lem:DDR19}
Assume that $||\bbh - \bbh'||_\infty \leq \tau$ and $|| \bbh - \bbh'||_0 \leq \Delta$ for any neighbors $\bbh,\bbh'$, then the procedure $\rTE{\Delta,\bar{d},\tau}$ is $(\diffpper/2,\delta)$-DP.
\end{lemma}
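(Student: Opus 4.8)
I would fix neighbors $\bbh \sim \bbh'$ with $||\bbh-\bbh'||_\infty \le \tau$ and $||\bbh-\bbh'||_0 \le \Delta$, set $D=\{i : h_i \ne h'_i\}$ so that $|D|\le\Delta$, and recall that an output of $\rTE{\Delta,\bar{d},\tau}$ is an ordered list of index--count pairs --- the indices $i$ whose noisy value $v_i$ beats the noisy threshold $v_\bot$ --- together with $(\bot,v_\bot)$. The plan is the standard two-step template for unknown-domain mechanisms: (1) show that, under either $\bbh$ or $\bbh'$, the \emph{good event} $E$ that no index of $D$ is ever reported above the threshold fails with probability at most roughly $\tfrac{\hat\delta}{4}\bigl(3+\log(\Delta/\hat\delta)\bigr)$; and (2) show that conditioned on $E$ the output distribution obeys the $(\diffpper/2)$-DP inequality, by reduction to a known-domain Laplace-mechanism argument. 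Composing the two and folding the multiplicative factor of the DP inequality into the failure probability is exactly what yields the $\delta$ defined implicitly by $\delta=\tfrac{\hat\delta}{4}(e^{\diffpper/2}+1)\bigl(3+\log(\Delta/\hat\delta)\bigr)$.

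\textbf{Step 1: the threshold absorbs the unknown tail.} Any index $i$ outside the top-$\bar{d}$ of $\bbh$ has true count at most $h_{(\bar{d}+1)}$, while $v_\bot$ equals $h_{(\bar{d}+1)}$ plus the deterministic offset $\tau\bigl(1+2\Delta\log(\Delta/\hat\delta)/\diffpper\bigr)$ plus an independent $\lap(2\tau\Delta/\diffpper)$ draw. Hence for such an $i$ the event $v_i>v_\bot$ is contained in the event that the difference of two independent $\lap(b)$ variables, $b=2\tau\Delta/\diffpper$, exceeds $b\log(\Delta/\hat\delta)$; a direct computation of that tail gives $\tfrac14\bigl(2+\log(\Delta/\hat\delta)\bigr)\tfrac{\hat\delta}{\Delta}$, and a union bound over the at most $\Delta$ elements of $D$ --- the only indices whose output membership can change between $\bbh$ and $\bbh'$ --- bounds $\Pr[\bar E]$ by roughly $\tfrac{\hat\delta}{4}\bigl(2+\log(\Delta/\hat\delta)\bigr)$. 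The extra additive $\tau$ in the offset supplies the slack that makes the same bound hold after interchanging $\bbh$ and $\bbh'$ (i.e.\ after shifting $h_{(\bar{d}+1)}$ by $\tau$ and allowing a $D$-index to enter or leave the top-$\bar{d}$), which replaces the constant $2$ by $3$ in the defining equation; this step is also where the $\log(\Delta/\hat\delta)$ factor of the offset (hence of the relation for $\hat\delta$) originates.

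\textbf{Step 2: reduction to the Laplace mechanism, then assembly.} On $E$ no index of $D$ is output, so one can couple the two executions so that the reported list, its order, and $v_\bot$ become a fixed post-processing of the noisy values of a common set of at most $\bar{d}$ coordinates together with $v_\bot$; among the at most $\bar{d}+1$ coordinate-values feeding these noisy quantities (the base $h_{(\bar{d}+1)}$ of $v_\bot$ included) at most $\Delta$ differ between $\bbh$ and $\bbh'$, each by at most $\tau$, so their $\ell_1$ change is at most $\Delta\tau$. Since each such value carries an independent $\lap(2\tau\Delta/\diffpper)$ perturbation and $\Delta\tau/(2\tau\Delta/\diffpper)=\diffpper/2$, the usual Laplace-mechanism calculation gives the $e^{\diffpper/2}$ multiplicative bound on the $E$-conditional distributions; releasing the counts as well as the indices costs nothing, since those counts are precisely the already-noised $v_i$. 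Putting the pieces together, for every outcome set $S$,
\[
\Pr[\rTE{\Delta,\bar{d},\tau}(\bbh)\in S]\ \le\ e^{\diffpper/2}\,\Pr[\rTE{\Delta,\bar{d},\tau}(\bbh')\in S]\ +\ \delta,
\]
where the $\delta$ term collects $\Pr[\bar E]$ over both databases, with one copy multiplied by the $e^{\diffpper/2}$ of the inequality; tallying the constants recovers exactly $\delta=\tfrac{\hat\delta}{4}(e^{\diffpper/2}+1)\bigl(3+\log(\Delta/\hat\delta)\bigr)$.

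\textbf{Where the difficulty lies.} I expect the main work to be in (a) calibrating the offset so that symmetric-difference indices essentially never cross the threshold \emph{uniformly over both neighbors}, and (b) the coupling underlying Step 2, which is delicate precisely because the effective domain --- the top-$\bar{d}$ indices and the identity of the rank-$(\bar{d}+1)$ element feeding $v_\bot$ --- itself depends on the data and can shift when a $D$-index moves; this is the content of the known-domain limited-sensitivity lemma from \cite{DurfeeRo19} that the argument is reduced to. Once that machinery is in place, passing from the $\tau=1$ case to general $\tau$ and appending the counts to the output are routine, since both only rescale or reuse noise that has already been accounted for.
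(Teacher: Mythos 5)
Your overall route---split the analysis into a low-probability bad part and a part where a fixed-domain Laplace argument yields the $e^{\diffpper/2}$ factor, then assemble so the additive term is $(e^{\diffpper/2}+1)$ times the per-side bad probability, recovering the defining relation for $\hat\delta$---is the same template the paper follows (bad-outcome bound, fixed-domain bound, intersection-domain sandwich, final assembly). However, there is a concrete flaw in how you set up the split. Your good event $E$ only forbids indices of $D=\{i: h_i\neq h'_i\}$ from beating the noisy threshold, justified by the claim that $D$ contains ``the only indices whose output membership can change between $\bbh$ and $\bbh'$.'' That claim is false: an index $j\notin D$ with unchanged count can be pushed out of (or pulled into) the accessed top-$\bar d$ set when counts in $D$ change, so outcomes reporting $j$ can be possible under $\bbh$ yet impossible under $\bbh'$. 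Consequently, conditioned on your $E$ there remain outcomes with zero probability under the neighbor, so no $e^{\diffpper/2}$ bound between the $E$-conditional distributions can hold; the coupling in your Step 2 silently assumes the two runs report from a common index set, which your $E$ does not guarantee.

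The repair is exactly the paper's formulation: take the bad part to be the outcomes in the support of one run but not the other (the sets $\cS^\delta_{\lap}$, $\cS'^\delta_{\lap}$), equivalently pass each run to the intersection domain $\cD^\diffpper=\domainE{\bk}{\bbh}\cap\domainE{\bk}{\bbh'}$ at an additive cost of $\bar\delta$ each (Lemma~\ref{lem:lap_eps}), and only then apply the fixed-domain bound (Lemma~\ref{lem:lap_restricted}). Your constants survive this correction: each one-sided domain difference still has at most $\Delta$ elements, and any such $j$ has true count at most $h_{(\bar d+1)}+\tau$, which is precisely what the extra $+\tau$ in the threshold offset absorbs, so the Laplace-difference tail and the resulting $\bar\delta=\tfrac{\hat\delta}{4}\bigl(3+\log(\Delta/\hat\delta)\bigr)$ are unchanged. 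One further point you gloss over, and where the cited analysis of Durfee and Rogers does real work: in the fixed-domain step the threshold base $h_{(\bar d+1)}$ is a rank statistic that can itself move by $\tau$ in addition to the at most $\Delta$ changed counts inside the domain, so the ``$\ell_1$ change at most $\Delta\tau$'' accounting is not a bare coordinate count; it relies on the condition that the rank-$(\bar d{+}1)$ index lies outside the fixed domain, as stated in Lemma~\ref{lem:lap_restricted}.
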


\subsection{Unknown Domain with Unrestricted Sensitivity\label{sect:unrestricted}}
\begin{algorithm}[h!]
	\caption{$\rT{k,\bar{d},\tau}$; Unknown domain mechanism with access to $\bar{d} + 1 > k$ elements, $\ell_\infty$-sensitivity $\tau$}
	\begin{algorithmic}
		\State \textbf{Input:} Histogram $\bbh$; outcomes $k$, cut off at $\bar{d}+1$, and $\diffpper,\delta$.
		\State \textbf{Output:} Ordered set of indices and counts.
		\State Sort $h_{(1)} \geq h_{(2)} \geq \cdots \geq h_{(\bar{d} + 1)}$.
		\For {$i \in \{k, \cdots, \bar{d} \}$}
			\State Set $v_i = h_{(i+1)} + \tau + \tau \log(i/\delta)/\diffpper + \gum(\tau/\diffpper)$
		\EndFor
		\State Set $\bk = \argmin\{v_i \}$.
		\State Set $h_{\bot} = h_{(\bk+1)} + \tau \cdot (1 + \ln(\min\{\bk, \bar{d}- \bk\}/\delta)/\diffpper)$.
		\State Set $v_{\bot} = h_{\bot} +  \gum(\tau/\diffpper)$.
		\For {$j \leq \bk$}
				\If{ $h_{(j)} > h_{(\bar{k} + 1)}$ }
        					\State Set $v_{(j)} = h_{(j)} + \gum(\tau/\diffpper)$.
				\EndIf
		\EndFor
        		\State Sort $\{v_{(j)}\} \cup v_{\bot}$.
        		\State Let $v_{i_{1} },....,v_{i_{j}},v_{\bot}$ be the sorted list up until $v_{\bot}$.
		\State $\{Z_i \}_{i=1}^j \stackrel{i.i.d.}{\sim} \lap(2\tau/\diffpper)$
		\If{$j < k$}
			\State Return $\{(i_{1}, h_{i_{1}} + Z_1 ),...,(i_{j}, h_{i_{j}} + Z_j ),\bot\}$ 
		\Else
			\State Return $\{(i_{1}, h_{i_{1}} + Z_1 ),...,(i_{k}, h_{i_{k}} + Z_k )\}$.	
		\EndIf
	\end{algorithmic}\label{algo:genLimitDom}
\end{algorithm}
We present the $\rT{k,\bar{d},\tau}$ procedure in Algorithm~\ref{algo:genLimitDom} in a more general form than in \cite{DurfeeRo19}, which only considered the distinct count case, i.e. $\tau = 1$.  The proof of the following theorem follows the same analysis as in \cite{DurfeeRo19}.  Note that we use the optimal threshold index procedure from Algorithm 6 in Durfee and Rogers \cite{DurfeeRo19} by default and return counts by adding Laplace noise to the discovered elements in the top-$k$.
\begin{theorem}[Durfee and Rogers \cite{DurfeeRo19}]
Assume $||\bbh - \bbh'||_\infty \leq \tau$ for any neighbors $\bbh,\bbh'$. Then $\rT{k,\bar{d},\tau}$ is $((2k+1)\diffpper,\delta)$-DP.
\end{theorem}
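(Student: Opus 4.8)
The plan is to read $\rT{k,\bar{d},\tau}$ (Algorithm~\ref{algo:genLimitDom}) as an adaptive composition of at most $2k+1$ mechanisms, each $\diffpper$-BR, and then to pay a separate $\delta$ for the fact that the mechanism inspects only the top $\bar{d}+1$ counts rather than the whole domain. The three groups of sub-mechanisms are: (i) the selection of the threshold index $\bk=\argmin_i v_i$, which by the Gumbel-max / exponential-mechanism equivalence is a single exponential mechanism over the index set $\{k,\dots,\bar{d}\}$ with quality score equal to the data-dependent part $-h_{(i+1)}(\bbh)$ of $v_i$; since an order statistic $h_{(i)}$ has $\ell_\infty$-sensitivity at most $\tau$ and, for an ordered neighbor pair, moves monotonically, the range of this quality score is at most $\tau$, so with noise $\gum(\tau/\diffpper)$ this step is $\diffpper$-BR. (ii) The peeling selection of the output indices together with the null symbol $\bot$: adding $\gum(\tau/\diffpper)$ to the candidate counts $h_{(j)}$ and to $h_\bot$ and reading the sorted prefix up to $v_\bot$ is, again by the Gumbel-max equivalence, a sequence of at most $k$ iterated exponential mechanisms; conditioned on $\bk$, the quality score $h_\bot=h_{(\bk+1)}+\mathrm{const}(\bk)$ also has $\ell_\infty$-sensitivity at most $\tau$, so each round is $\diffpper$-BR. (iii) The count releases $h_{i_\ell}+Z_\ell$ with $Z_\ell\sim\lap(2\tau/\diffpper)$: at most $k$ Laplace mechanisms on coordinates of $\ell_\infty$-sensitivity $\tau$, each $(\diffpper/2)$-DP and hence $\diffpper$-BR. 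Composing these $\le 1+k+k=2k+1$ rounds via Lemma~\ref{lem:compBoundedRange} (taking the $t\diffpper$ branch, or equivalently basic composition of $\diffpper$-DP mechanisms) yields the $(2k+1)\diffpper$ term. The adaptivity allowed in Lemma~\ref{lem:compBoundedRange} is exactly what lets the mechanisms in groups (ii)--(iii) --- and the quality score $h_\bot$ --- depend on the data-dependent value $\bk$ released in group (i), so no hand conditioning on $\bk$ is needed for this part.

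\textbf{The $\delta$ term.} Pure DP fails only because $\rT{k,\bar{d},\tau}$ sees just the top $\bar{d}+1$ counts, so for a neighbor pair with $\bbh^{+}\ge\bbh^{-}$ (member present versus absent) the larger histogram $\bbh^{+}$ may have an element in its accessible set that $\bbh^{-}$ cannot return. For the direction bounding $\Pr[\rT{}(\bbh^{+})\in S]$, split a generic output set as $S=S_{\mathrm{acc}}\sqcup S_{\mathrm{new}}$, where $S_{\mathrm{new}}$ collects outputs containing such a "new'' element. On $S_{\mathrm{acc}}$ the composition argument above applies once one restricts attention to elements accessible to both neighbors and marginalizes out the extra candidates of $\bbh^{+}$ --- whose noisy values the margin built into $v_\bot$ keeps below threshold with the right probability --- leaving each per-round $\diffpper$-BR bound intact; this gives $\Pr[\rT{}(\bbh^{+})\in S_{\mathrm{acc}}]\le e^{(2k+1)\diffpper}\Pr[\rT{}(\bbh^{-})\in S_{\mathrm{acc}}]$. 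On $S_{\mathrm{new}}$ one shows directly that $\Pr[\rT{}(\bbh^{+})\in S_{\mathrm{new}}]\le\delta$: any new element $e$ satisfies $h^{+}_e\le h^{-}_e+\tau\le h_{(\bar d+1)}(\bbh^{-})+\tau\le h_{(\bk+1)}(\bbh^{+})+\tau$, whereas $h_\bot=h_{(\bk+1)}+\tau\bigl(1+\ln(\min\{\bk,\bar d-\bk\}/\delta)/\diffpper\bigr)$, so $e$ is returned only if its Gumbel draw exceeds the $\bot$-draw by more than $\tau\ln(\min\{\bk,\bar d-\bk\}/\delta)/\diffpper$; since the difference of two independent $\gum(\tau/\diffpper)$ variables is logistic with $\Pr[\,\cdot>t\,]\le e^{-t\diffpper/\tau}$, this probability is at most $\delta/\min\{\bk,\bar d-\bk\}$ per element, and the number of offending elements is controlled (via the $\min\{\bk,\bar d-\bk\}$ calibration together with the exclusion ``$h_{(j)}>h_{(\bk+1)}$'' in the candidate loop) so that the union bound totals at most $\delta$. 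The penalty $\tau\log(i/\delta)/\diffpper$ inside the $v_i$'s plays the analogous role of keeping the chosen $\bk$ from over-extending into the unseen region, feeding the same bound. The reverse direction is the same argument with the roles of $\bbh^{+},\bbh^{-}$ swapped (now there is no $S_{\mathrm{new}}$, only the restricted-composition subtlety). Together these give $((2k+1)\diffpper,\delta)$-DP.

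\textbf{Main obstacle.} The composition bookkeeping is routine; the delicate part --- precisely what the authors delegate to "the same analysis as in \cite{DurfeeRo19}'' --- is making the $S_{\mathrm{acc}}$ step rigorous, because the output distributions of $\rT{}(\bbh^{+})$ and $\rT{}(\bbh^{-})$ live over different candidate sets and different values of $\bk$, so one cannot apply the exponential-mechanism ratio off the shelf. The resolution is the characteristic Durfee--Rogers device of analyzing the peeling-against-$\bot$ step as though it were run over the union of the two accessible sets, exploiting monotonicity of the order statistics to fix the sign of every quality-score difference, and using the calibrated margins in $h_\bot$ (and in the $v_i$'s) to absorb into $\delta$ every event in which the two executions could disagree about which elements are "in play.'' I would carry this out by conditioning on $\bk$ and treating the cases $\bk(\bbh^{+})=\bk(\bbh^{-})$ and $\bk(\bbh^{+})\ne\bk(\bbh^{-})$ separately, although, as noted above, once $\bk$ is regarded as the first released coordinate the adaptive composition theorem already reduces the latter case to a per-round BR check plus the $\delta$-accounting; the remaining work is the union-bound bookkeeping and the elementary logistic tail inequality quoted above.
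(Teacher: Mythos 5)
Your proposal is correct in outline and follows essentially the same route as the paper, which simply defers to the Durfee--Rogers analysis: view $\rT{k,\bar{d},\tau}$ as an adaptive composition of $2k+1$ $\diffpper$-BR mechanisms (one threshold-index selection, at most $k$ peeling/Gumbel selections against $\bot$, and at most $k$ Laplace count releases) to get the $(2k+1)\diffpper$ term, and absorb into $\delta$ the bad outcomes where the two neighbors' accessible top-$\bar{d}$ domains disagree, exactly as in the appendix's analogous good/bad-outcome decomposition for the restricted-sensitivity variant. The delicate common-domain bookkeeping you flag is indeed where the cited work does the real labor, and your plan for it (compare both runs over a shared candidate set, use monotonicity of the order statistics, and charge the calibrated margin in $h_\bot$ via the logistic tail and a union bound) matches that analysis.
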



\section{Privacy Budget Management Service \label{sect:Budget}}
We ultimately want to ensure that no analyst can identify any individual's data with high confidence.  We then impose a strict overall $(\diffpmax,\deltamax)$-DP guarantee.  In order to compute the parameters $(\diffpper,\delta)$ that we use in each call to our algorithms\footnote{Note that for the Laplace and Exponential mechanisms in the known domain, $\delta =0$.} over an entire sequence of interactions with the API, we also want to know how many queries the API will allow, denoted as $\ellmax$ that we term the \emph{call budget}, which will effectively impact $\deltamax$.  Further, we want to track the number of elements we want to return, denoted as $\kmax$ that we term the \emph{information budget}, which will effectively impact $\diffpmax$.  Note that $\kmax$ does not necessarily equal the number of elements returned, because we might be in the restricted sensitivity setting, and $\ellmax$ does not precisely equal the number of calls to the API, since we might be in the known domain setting for some queries.  Once we have $(\diffpper,\delta)$, we will only use these parameters in each algorithm, hence not allowing for adaptively changing privacy parameters.  

\subsection{Budget Management Implementation}

As mentioned in Section~\ref{sect:Architecture}, the budget manager needs to be a distributed system so that it can be accessed/updated from different application execution platforms.  Each analyst may access data from multiple data centers and each access must deduct from the same budget. Hence, the budget manager maintains eventual consistency across data centers.  

The budget can be thought of as an associative array with keys from $[\ell, k]$ and values as the corresponding units used.
Given a particular outcome $o$ from the API, the budget service will update $\kmax \gets \kmax - \Delta$ for $\Delta$-restricted sensitivity queries or $\kmax \gets \kmax - 2|o|$ for unrestricted sensitivity queries where $|o|$ denotes the number of elements returned in outcome $o$.  Furthermore, the privacy budget management system will update $\ellmax \gets \ellmax - 1$ for each query the analyst makes that is in the unknown domain setting.  Once $\kmax$ or $\ellmax$ are depleted, we prevent the analyst from making any other queries.  We address the challenge of computing the individual privacy parameters $(\diffpper,\delta)$ given $(\diffpmax,\delta^\star, \kmax,\ellmax)$ in Theorem~\ref{thm:final_parameters}.

We adopt a privacy budget management service that assumes any user does not collude with other analysts.  Hence each analyst is given her own privacy budget to interact with the Audience Engagement API and her queries do not impact the budget of another analyst.  One can imagine variants of this assumption, such as all analysts that belong to the same company must share a budget.  Further, the API adheres to the privacy budget up to some time frame.  Thus, if an analyst has asked more than $\ellmax$ unknown domain queries, then she will not be allowed any further queries.  After this prescribed time frame, the parameters effectively get refreshed and the analyst can continue asking queries.  Refresh is acceptable at regular intervals if the underlying data is flushed and replaced at similar intervals, whether through complete snapshot replacement, or rolling windows, such that the user's data does not remain constant.

The application links with a budget manager client library so as to hide the implementation details of the budget management service because application writers do not need to know the details about the budget database, or the budget refresh mechanisms. The parameters ($\kmax,\ellmax$) may be configured by the application.

\subsection{Differential Privacy Composition\label{sect:privacybudget}}

We present pseudocode for the privacy budget management service in Algorithm~\ref{algo:Budget}.  We then present a way to compute the privacy guarantee of our overall system, which largely follows the analysis from Durfee and Rogers \cite{DurfeeRo19}.  Essentially, the analysis follows from the fact that each algorithm can be represented as an iterative sequence of $\diffpper$-BR algorithms.  Note that the algorithms in the unknown domain setting have a probability $\delta$ of larger privacy loss, which we account for in the overall $\delta^\star$ in the privacy guarantee.
\begin{algorithm}[h!]
	\caption{$\BMS{\kmax,\ellmax}$; Budget Management}
	\begin{algorithmic}
		\State \textbf{Input:} An adaptive stream of histograms $\bbh_{1},\bbh_{2},....$, 	fixed integers $\kmax$ and $\ellmax$, along with per iterate privacy parameters $\diffpper,\delta$.
		\State \textbf{Output:} Sequence of outputs $(o_{1}, o_{2}, \cdots)$. 
		\While {$\kmax > 0$ and $\ellmax > 0$}
		\State Select $\bbh_{i} \in \N^{d_i}$ with $\ell_\infty$-bound $\tau_i$.
		\State Select $k_i$ and number of elements allowed to access $\bar{d}_i$
		\If{ histogram has $\Delta$-restricted sensitivity}
			\If{$\Delta> \kmax$}
				\State Break
			\EndIf
			\If{$\bar{d}_i > d_i$}
				\State $o_i =\knownLap{\Delta,\tau_i}(\bbh_i)$.
				\State Update $\kmax \gets \kmax - \Delta$.
			\Else
				\State $o _i = \rTE{\Delta,\bar{d}_i,\tau_i}(\bbh_i)$
				\State Update $\ellmax \gets \ellmax -1$. 
				\State Update $\kmax \gets \kmax - 1$.
			\EndIf
		\EndIf
		\If{ histogram has unrestricted sensitivity}
			\If{$2k_i > \kmax$}
				\State Break
			\EndIf
			\If{$\bar{d}_i > d_i$}
				\State $o_i =\knownEM{k_i,\tau_i}(\bbh_i)$.
				\State Update $\kmax \gets \kmax - 2k_i$.
			\Else
				\State $o _i = \rT{k_i,\bar{d}_i,\tau_i}(\bbh_i)$
				\State Update $\ellmax \gets \ellmax -1$.
				\State Update $\kmax \gets \kmax - (2|o_i| +1 - \1{o_i[-1] = \bot})$.
			\EndIf
		\EndIf
		\EndWhile
		\State Return $o = (o_{1},o_{2},\cdots)$	
	\end{algorithmic}\label{algo:Budget}
\end{algorithm}

In order to allow for the budget management service to return counts in the unrestricted sensitivity setting, we need to account for that in our overall budget.  Further, in the unknown domain/unrestricted sensitivity setting, if the last element of $o_i$, denoted as $o_i[-1]$, is $\bot$ at round $i$ then adding Laplace noise with parameter $2\tau_i/\diffpper$ to the counts of each of the discovered $|o_i| - 1$ elements. will ensure $\diffpper$-BR for each count.  We can then apply our privacy loss bounds to get an overall DP guarantee by updating $\kmax \gets \kmax - 2|o_i|$ and when the last element in $o_i$ is not $\bot$, then we instead update $\kmax \gets \kmax- (2|o_i| + 1)$.  Note that if we did not require counts in the results and need only return an ordered list of elements in the top-$k$, then we need only update $\kmax \gets \kmax - (|o_i|+1)$.

\begin{theorem}\label{thm:final_parameters}
For $\delta' \geq 0$ and $\diffpper,\delta >0$, the $\BMS{\kmax,\ellmax}$ is $(\diffpmax,\delta^\star)$-DP where $\delta^\star = 2\ellmax\delta + \delta'$ and $\diffpmax$ is defined as the minimum between $\kmax\diffpper$ and the following,
\begin{align}
& \kmax\left( \frac{\diffpper}{1 - e^{\diffpper}} - 1 - \log\left( \frac{\diffpper}{1 - e^{\diffpper}} \right) \right) \nonumber \\
& \qquad\qquad\qquad +\diffpper  \sqrt{ \frac{\kmax}{2} \log(1/\delta')}.
\label{eq:epsmax}
\end{align}
\end{theorem}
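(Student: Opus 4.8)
The plan is to realize an entire interaction with $\BMS{\kmax,\ellmax}$ as an adaptive composition of $\diffpper$-bounded range mechanisms and then invoke Lemma~\ref{lem:compBoundedRange}. Two quantities must be controlled: (i) the total number of $\diffpper$-BR mechanisms composed over the run, which I claim is at most $\kmax$; and (ii) the extra additive slack coming from the low-probability failure events of the unknown-domain algorithms, which I will bound by $2\ellmax\delta$ via a union bound over the at most $\ellmax$ unknown-domain queries. Granting both, Lemma~\ref{lem:compBoundedRange} applied with $t = \kmax$ and per-round parameter $\diffpper$ shows that the composition, conditioned on all good events, is $(\diffpmax,\delta')$-DP with $\diffpmax$ the minimum of $\kmax\diffpper$ and the quantity in~\eqref{eq:epsmax} --- which is exactly~\eqref{eq:compBetter} instantiated at $t=\kmax$, $\diffp=\diffpper$, and failure parameter $\delta'$ --- and accounting for the failure events upgrades this to $(\diffpmax,\,\delta' + 2\ellmax\delta)$-DP $= (\diffpmax,\delta^\star)$-DP.

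The structural heart of the argument is that each algorithm called by Algorithm~\ref{algo:Budget} decomposes into a number of $\diffpper$-BR mechanisms equal to the amount it subtracts from $\kmax$, possibly after conditioning on a good event. For $\knownLap{\Delta,\tau}$ (Algorithm~\ref{algo:LapMech}): since neighbors differ on at most $\Delta$ coordinates by at most $\tau$ each, adding $\lap(2\tau/\diffpper)$ to every coordinate is, for privacy purposes, the composition of $\Delta$ one-dimensional Laplace mechanisms, each $\diffpper/2$-DP and hence $\diffpper$-BR --- matching $\kmax \gets \kmax - \Delta$. For $\knownEM{k,\tau}$ (Algorithm~\ref{algo:EM}): reporting the top-$k$ of the Gumbel-perturbed counts is $k$ rounds of the peeling Exponential Mechanism, each $\diffpper$-BR, and the $k$ subsequent Laplace perturbations of the released counts are each $\diffpper/2$-DP hence $\diffpper$-BR, for a total of $2k$ --- matching $\kmax \gets \kmax - 2k$. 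For $\rTE{\Delta,\bar{d},\tau}$: Lemma~\ref{lem:DDR19} gives $(\diffpper/2,\delta)$-DP via a good event of probability at least $1-\delta$ under either neighbor, conditioned on which the mechanism (whose per-coordinate noise $\lap(2\tau\Delta/\diffpper)$ is scaled precisely so that the total loss is $\diffpper/2$-DP) is a single $\diffpper$-BR mechanism --- matching $\kmax \gets \kmax - 1$. For $\rT{k,\bar{d},\tau}$: conditioned on the analogous good event, the threshold search, the peeling discovery of the (at most $|o|$) returned indices, and the Laplace perturbation of their counts decompose into $2|o| + 1 - \1{o[-1]=\bot}$ applications of $\diffpper$-BR (one fewer when the list ends in $\bot$, which carries no count) --- matching the deduction in Algorithm~\ref{algo:Budget} and consistent with the $((2k+1)\diffpper,\delta)$-DP bound stated earlier. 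The extension from the distinct-count case $\tau = 1$ of Durfee and Rogers~\cite{DurfeeRo19} to general $\tau$ comes from rescaling all quality scores and noise parameters by $\tau$, which leaves every BR constant unchanged. Finally, because the \textbf{while}-loop executes a query only when the corresponding budget check passes and halts once $\kmax$ or $\ellmax$ is exhausted, the accumulated deductions --- equivalently the number of $\diffpper$-BR mechanisms composed --- total at most $\kmax$, and at most $\ellmax$ of the executed queries are in the unknown-domain setting.

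To make the conditioning on good events rigorous inside the composition, I would couple each unknown-domain call to an idealized mechanism that is genuinely $\diffpper$-BR (or a genuine composition of such) and agrees with the true call off its good event; the idealized run is then an honest adaptive composition of at most $\kmax$ $\diffpper$-BR mechanisms, hence $(\diffpmax,\delta')$-DP by Lemma~\ref{lem:compBoundedRange}, and since the idealized and true runs disagree with probability at most $\ellmax\delta$ under either neighbor, a standard total-variation transfer yields $(\diffpmax,\delta'+2\ellmax\delta)$-DP for $\BMS{\kmax,\ellmax}$ itself. The adaptivity in play here --- the analyst choosing the next histogram $\bbh_i$, the parameters $\tau_i, k_i, \bar{d}_i$, and even whether each query is known- or unknown-domain, all as functions of earlier outputs --- is precisely the regime covered by the adaptive hypothesis of Lemma~\ref{lem:compBoundedRange}, and the good events may be defined online.

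I expect the main obstacle to be the unknown-domain bookkeeping: checking that, under a correctly specified good event of probability at least $1-\delta$, $\rTE{\Delta,\bar{d},\tau}$ and $\rT{k,\bar{d},\tau}$ decompose into \emph{exactly} the claimed counts of $\diffpper$-BR mechanisms --- which requires re-tracing the Durfee--Rogers analysis, in particular how the $\bot$ threshold and the $\hat{\delta}$ solved for in Algorithm~\ref{algo:genLimitDomLap} enter the noisy comparisons --- together with being careful with the two-sided coupling so as to land the factor $2$ in $\delta^\star = 2\ellmax\delta + \delta'$. The remaining pieces --- the known-domain decompositions, telescoping the deductions against $\kmax$, and the final substitution into Lemma~\ref{lem:compBoundedRange} --- are routine.
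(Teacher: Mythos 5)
Your proposal is correct and follows essentially the same route as the paper's proof: decompose every executed query into $\diffpper$-BR building blocks (each Laplace addition being $\diffpper/2$-DP hence $\diffpper$-BR, each Gumbel/exponential-mechanism step $\diffpper$-BR), match these blocks to the deductions from $\kmax$ so that at most $\kmax$ of them are composed, invoke the Dong et al.\ bounded-range composition bound (Lemma~\ref{lem:compBoundedRange}) with $t=\kmax$, and absorb the unknown-domain failure probabilities into $\delta^\star$ via the at most $\ellmax$ such calls. Your coupling argument for the good events and the factor $2$ in $2\ellmax\delta$ is in fact a more careful rendering of what the paper only states informally.
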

\begin{proof}
For the $\Delta$-restricted sensitivity setting, we are deducting the information budget by $\Delta$ in the known domain setting or we scale the privacy parameter by $\Delta$ and deduct one from the information budget in the unknown domain setting.  For a given histogram in the known domain setting, adding $\lap(2\tau/\diffpper)$ to each count will ensure $\Delta\diffpper$-BR.  We can also analyze this mechanism as if we iteratively add $\lap(2\tau/\diffpper)$ to each count and then apply any DP or BR composition bound to obtain a DP guarantee.  We need only apply composition for the number of elements that can actually change between neighboring datasets, i.e. $\Delta$, and not the full dimension of the histogram.  For all settings, the application of each Laplace mechanism is $\diffpper/2$-DP, hence $\diffpper$-BR, while each application of Gumbel noise (Exponential Mechanism) is $\diffpper$-BR.  Thus, we apply the composition bounds for BR mechanisms from Dong et al. \cite{DongDuRo19}.  The resulting bound applies BR composition over $\kmax$ many $\diffpper$-BR mechanisms and deducting 1 from the call cost budget $\ellmax$ if an unknown domain algorithm is used for a query.
\end{proof}
Given the total budget for the number of outcomes and queries $(\kmax,\ellmax)$ along with privacy budget $(\diffpmax,\deltamax)$ we can solve for the parameter $\diffpper$ that satisfies the budget, which is then used in each algorithm.  One approach we can use is the following (somewhat arbitrary) choice for $\delta = \tfrac{\delta^\star}{4 \ellmax}$ and $\delta' = \delta^\star/2$.


\section{Results}
We now present some preliminary results of our privacy system for the Audience Engagement API.  In Figure~\ref{fig:AlgoUnkGum} we present curves for the number of discovered elements in a top-$50$ query with varying $\diffpper$ and $\bar{d}$, i.e.  the number of elements to collect, in procedure $\rT{50,\bar{d},1}$ from Algorithm~\ref{algo:genLimitDom} with a fixed $\delta = 10^{-10}$.  The query is to find the top articles that distinct members from the San Francisco area are engaging with.  We provide intervals that contain the 25th and 75th percentiles over 1000 independent trials.  Note that the randomness in each trial is solely from the noise generation and we are using the same dataset each time.   We see that with the same level of privacy, increasing the number of elements to fetch allows us to \emph{discover} more elements.  Hence, we see a natural tradeoff not just between privacy $(\diffpper)$ and utility (number of elements returned), but also between run time (fetching more results) and utility.  For example, we can return twice as many elements if we fetch four times more elements with Pinot and setting $\diffpper = 0.08$.

\begin{figure}[h]
\centering
\includegraphics[width=.5\textwidth]{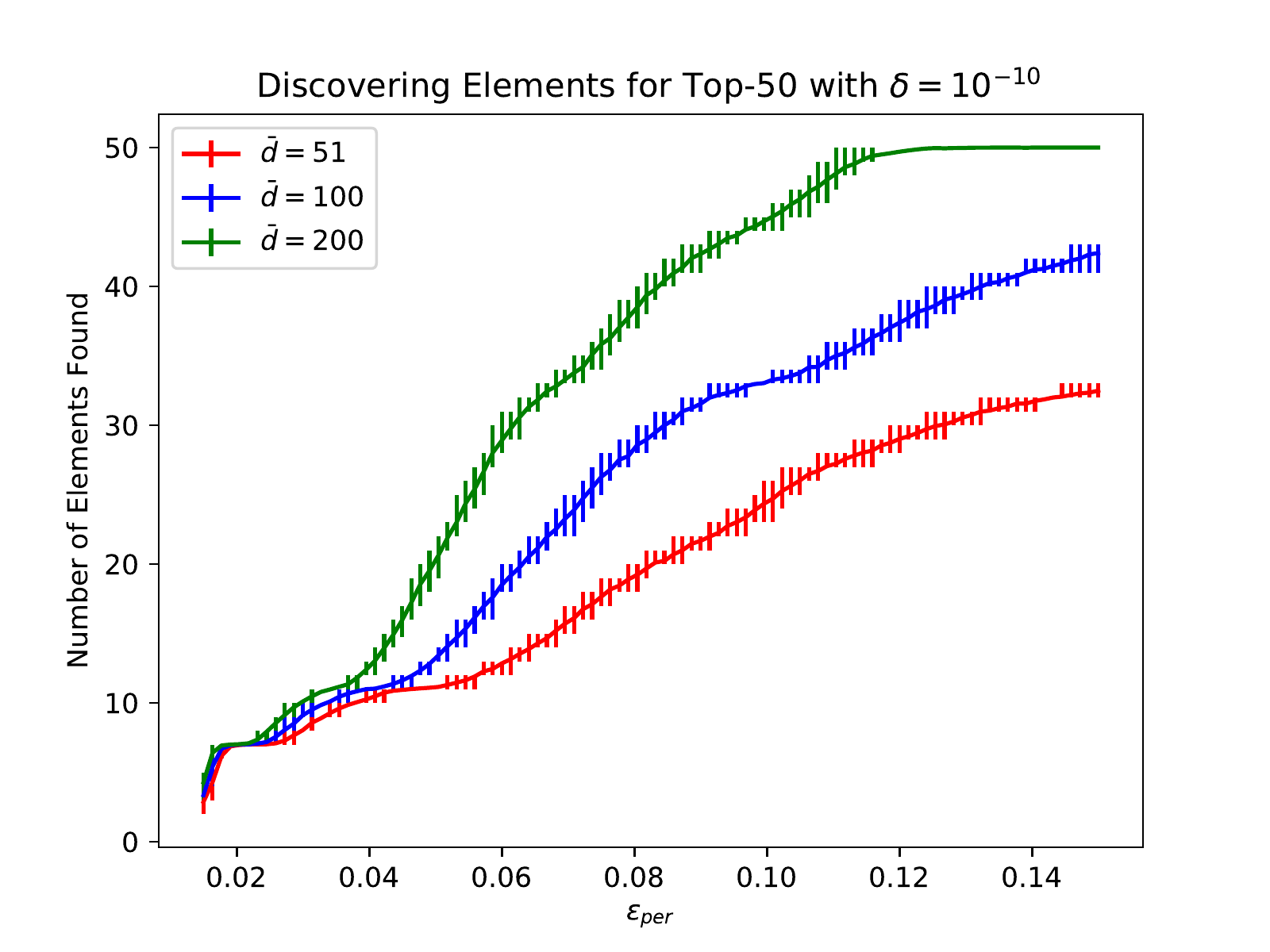}
\caption{The number of returned elements in $\rT{50,\bar{d},1}$ for a top-$50$ query with various $\bar{d}$.  We give the empirical average in 1000 trials and the (25\%,75\%) percentiles.  \label{fig:AlgoUnkGum}}
\end{figure}

We also empirically evaluate procedure $\rTE{\Delta,\bar{d},1}$ from Algorithm~\ref{algo:genLimitDomLap} in the unknown domain, $\Delta$-restricted sensitivity setting.  In Figure~\ref{fig:AlgoUnkLap} we show both the proportion of times in 1000 trials that each element was returned (right vertical axis) as well as the comparison between the noisy counts (in green) and the true counts (in red) that are returned for the discovered elements for a single trial (left vertical axis).  In each plot there is a privacy parameter $\diffpper \in \{0.1, 0.2\}$, with fixed $\delta = 10^{-10}$.  We ask for the top primary job titles of members that engaged with articles about \emph{privacy} or \emph{California}.  We assume that any one member cannot have more than one primary job title, hence $\Delta = 1$, and fetch $\bar{d} = 1000$ results from Pinot.

For the budget manager, we have a fixed budget for each marketing partner.  Once the privacy budget is depleted, a marketing partner would recycle old queries to get the same results or wait some fixed amount of time for the privacy budget to be refreshed. This policy decision for the rate in which to refresh the budget is dependent on how often the underlying dataset gets renewed and the characteristics of the underlying dataset.  In order to maintain consistency across the same queries on the same dataset, we use the same seed in the pseudorandom noise, as in \cite{KenthapadiTr18}.  

\begin{figure}
\centering
\includegraphics[width=.47\textwidth]{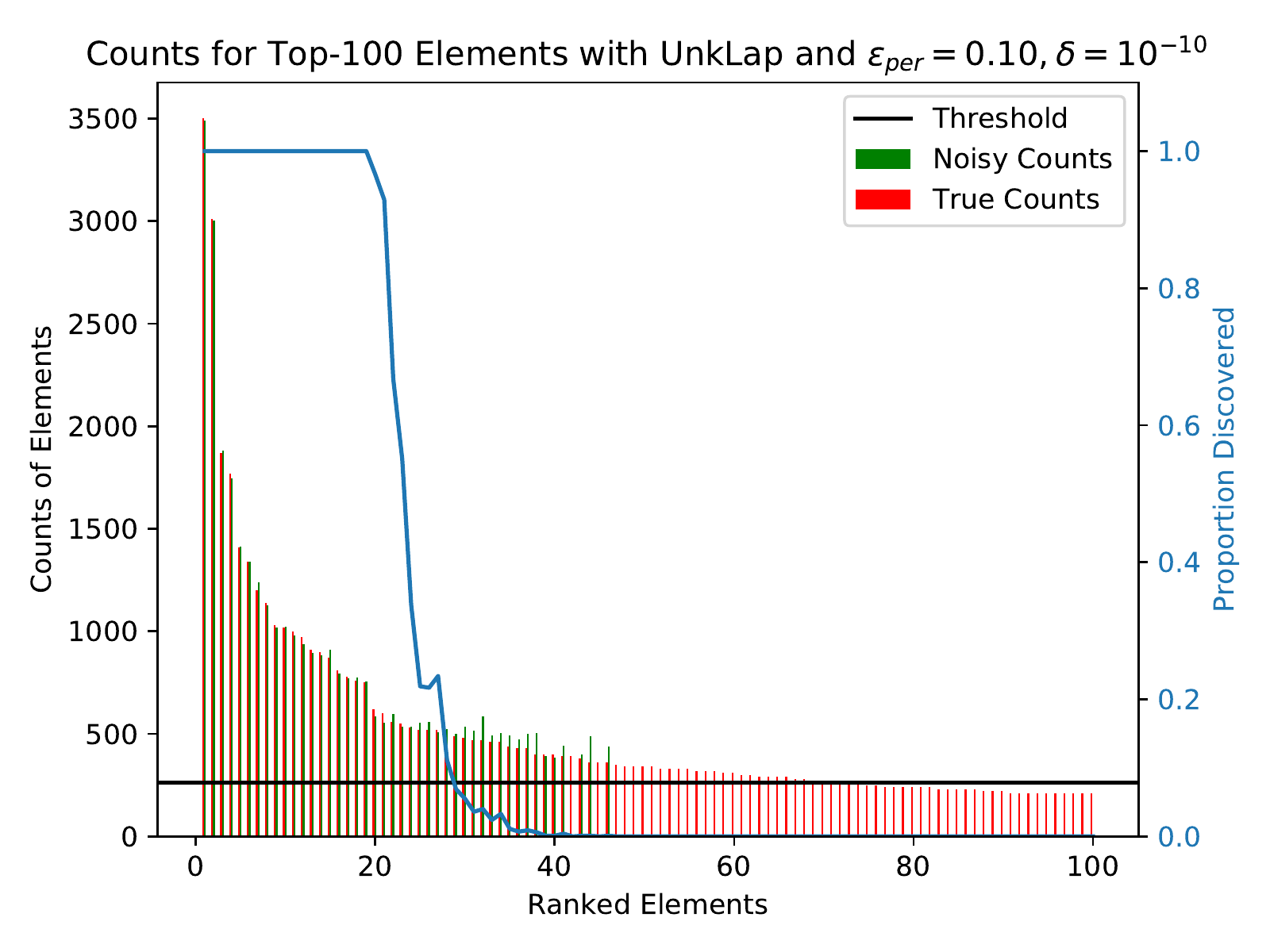}
\includegraphics[width=.47\textwidth]{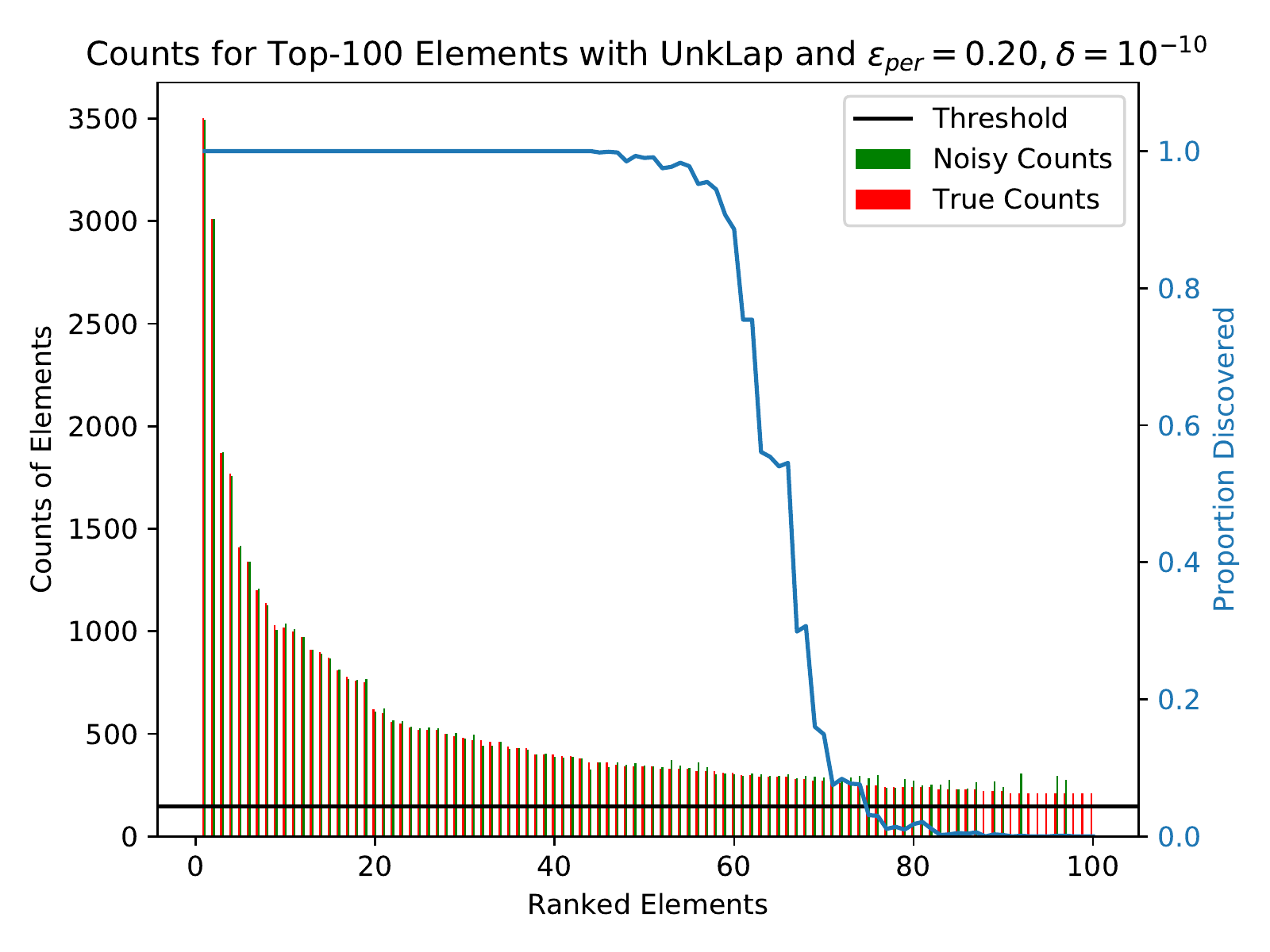}
\caption{The noisy counts (left $y$-axis) of the discovered elements returned in $\rTE{1,1000,1}$ for a top-$100$ query as well as the proportion (right $y$-axis) in which various elements in $1000$ independent trials were discovered.  The top plot gives results for $\diffpper = 0.1$ and the bottom plot gives $\diffpper = 0.2$. \label{fig:AlgoUnkLap}}
\end{figure}


\section{Deployment Considerations \label{sect:deployment}}

We now discuss our approach in deploying such a system that integrated multiple components, including Pinot for data analytics, differentially private algorithms, and a privacy budget management system.  Not knowing how external marketing partners would respond to budgeting access to queries and noisy results, we proceeded with a phased approach deploying our privacy system, first by turning on our privatized algorithms and only tracking usage of budget and then moving to enforce a given privacy budget. Recall that there are multiple parameters to set in our system and we detail the approach that we took to set them.   Ultimately, our privacy approach was guided by developing differentially private algorithms so that privacy loss could be quantified and we focused on specific attacks for how to set parameters.  Currently, the Audience Engagement API is still only available to a set of trusted users and pre-general availability (pre-GA). 

\subsection{Phased Approach of Deploying our Privacy System}

Given the multiple teams and components that made up our privacy system, we wanted to better understand the impact to the customer when the various components were enabled.  The main questions we faced included: how would external partners react with getting fewer results than they asked for (due to our private algorithms setting a data dependent threshold), and how much budget should we set without drastically modifying the behavior of how the external marketing partners interacted with the API.  We then sought to answer each question separately, with a phased approach of turning on each component.  The Audience Engagement API was planned to go through a soft launch period, followed by onboarding trusted partners, to then full scale deployment.  This allowed us to use the stages to deploy the different components of our privacy system.  

We first deployed our differentially private algorithms and identified the various columns in the data table that had a known/unknown domain or a restricted/unrestricted sensitivity.  We then tracked the privacy budget usage of the analysts that queried the API.  Recall that we refer to the quantities $k^*$ and $\ell^*$ as \emph{information budget} and \emph{call budget}, respectively.  Figure~\ref{fig:budgetUsage} shows the percentage of users that would exhaust their information and call budgets over the number of days since their budget was refreshed with various cut off amounts.  In this application, we refresh the budget once a month (see next subsection for more details on this), so we show the number of days since refresh on the $x$-axis because not all users will make a query on the first of the month and their budget does not refresh until they make their first query in the month.  Further, we see that there is a percentage of users that would exhaust their budget on the first day their budget is refreshed due to asking a batch of queries at once.

\begin{figure}[h]
\centering
\includegraphics[width=0.47\textwidth]{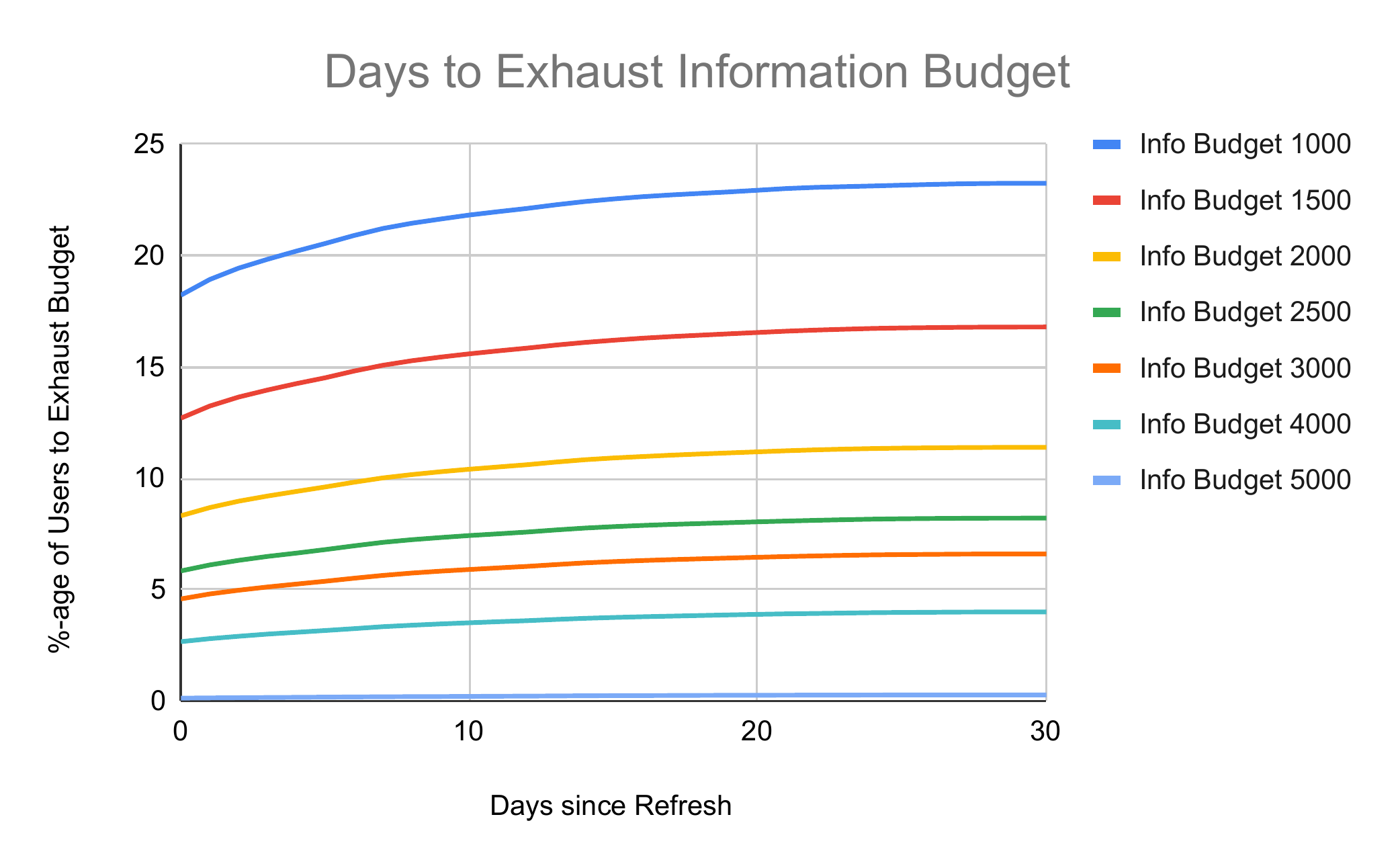}
\includegraphics[width=0.47\textwidth]{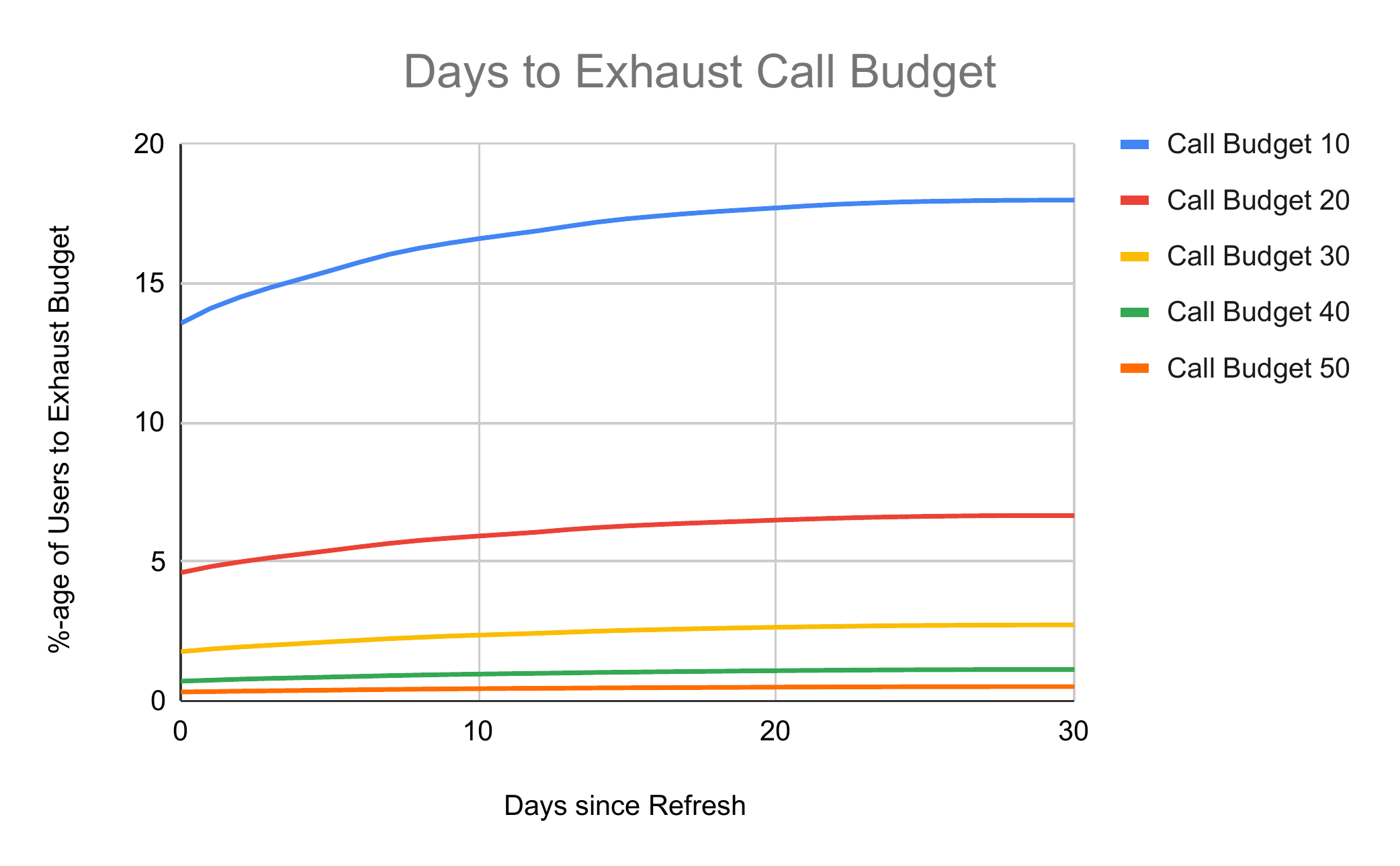}
\caption{We plot the percentage of analysts that exceed their information or call budgets from the time that budget is refreshed for various information and call budgets. \label{fig:budgetUsage}}
\end{figure}

The main takeaway from the plots in Figure~\ref{fig:budgetUsage} is that we can set information and call budgets in a way that does not limit a vast majority of users.  In particular, setting information budget to 3000 would not impact more than 93\% of users and a call budget of 30 would not impact more than 95\% of users.  

\subsection{Consistency and Data Refresh}

It is important to point out, from a product utility perspective, that data consistency is crucial.  Although this might seem to contradict the inherent randomness of differential privacy, we still want to ensure that if someone asks the same query, then they get the same result.  To ensure consistent results, we use the pseudorandom seed generation from \cite{KenthapadiTr18} for our randomized algorithms.  Hence, with the same query, we will use the same pseudorandom seed and hence the results will be the same, unless the underlying dataset has changed.  
The trusted parties who were granted access to the API all built UIs to facilitate access to their users, which prevents arbitrary query construction with small syntactic changes that leaves the semantics of the query unchanged.

This pseudorandom seed has an extra benefit for privacy as well, since if an analyst asks the same query multiple times, the random answers will not concentrate to the true answer.  Ensuring that private answers do not concentrate to the true result is one of the primary reasons for the privacy budget management, but setting information cost and call cost to ensure that the exact same query ran repeatedly does not concentrate to the true value would lead to overly pessimistic budgets to the point that the privacy system is not usable. 

The data for Audience Engagement is being placed into Pinot on a daily basis and retained for 30 days.  Hence, each day the data can potentially change, but not within the same day.  We then use the query and the date to determine the pseudorandom seed, otherwise if we only use the query, then the noise added to a specific query would always be the same, despite the data changing.  Note that we further use a secure key in the pseudorandom seed so that one cannot determine the seed only from the query and the date.

Recall that we are assuming analysts do not collude with each other and do not share the privatized results they receive, since this would mean essentially multiplying the information and call budgets that we enforced.   However, with the pseudorandom seed being generated the same way for all analysts, we know that each analyst is receiving the same result so that even if they colluded, they could not average their results to get more confidence in the true result.  

\subsection{Rationale for Parameters in our Privacy System}

There are ultimately four different parameters to set in our described privacy system: information budget $k^*$, call budget $\ell^*$, $\diffpper$, and $\delta$.  Note that with distinct counts, the $\ell_\infty$-sensitivity is 1 for all queries and for any query that is deemed restricted sensitivity, we set $\Delta = 1$.  The $k$ parameter for top-$k$ is an input from the analyst in each query and for unknown domain queries, $\bar{d}$ is set to be as large as possible without harming the efficiency requirements of the product, typically $\bar{d} = \max\{10k, 1000 \}$.

Part of the appeal of differential privacy is that it provides a worst case guarantee against privacy attacks, and can be simply stated as preventing an adversary from distinguishing whether a target's data was used in the analysis or not.  This strong protection stops being very meaningful once the privacy loss parameter $\diffp$ becomes large, say even larger than 1.  However, deployments of differential privacy have quoted much larger privacy parameters than 1, see for example \cite{ApplePrivacy17}, and more recent works in private ML have used much larger parameters \cite{Abadietal16}.  Further, these quoted parameters are for a one time calculation, rather than over multiple queries.  Only deploying privacy systems that incorporate differential privacy with small $\epsilon$ would limit its applicability.  In particular, in our setting we might only be able to allow for a single top-10 result before an analyst has exhausted his or her budget.  Boiling down the entire privacy considerations of a complicated system to whether a couple of parameters stay below some arbitrary privacy threshold for all deployments seems overly simplistic.  Other privacy safeguards can be added to increase the overall privacy, such as subsampling the dataset, as is done in this application. Further, the parameters in our system allow us to easily improve the overall privacy gains by modifying parameters.  Providing the tuning knob between 100\% utility and 100\% privacy is incredibly helpful in showing the impact that differential privacy has on the overall product.

The question then is, what protections can differential privacy provide, even with large $\epsilon$ parameters.  For this, we consider several different attacks, each used to set a certain parameter.  This is not meant as an exhaustive list of all the attacks we considered, nor does it mean that these certain attacks are expected.  This is merely to provide additional context to how privacy parameters can be set and might be useful for other privacy practitioners to use.

\subsubsection{Determining $\diffpper$}

We consider the scenario where the dataset remains the same over the course of the data retention period (30 days) and each day an analyst asks the same query on that dataset.  Recall how we set a pseudorandom seed for the same query and that it changes each day.  Thus, the analyst would get 30 different noisy results on the same count.  We then want to know the probability that the average of these noisy values will be within a tolerance of the true value.  We set this tolerance to $1/2$, since this would mean that the analyst rounding to the nearest integer would reveal the true count.  We then want to determine the following probability where $X_1, \cdots X_{30} \stackrel{i.i.d.}{\sim} \lap(2/\diffpper)$,
\[
\Pr\left[\left| \tfrac{1}{30} \sum_{i=1}^{30} X_i \right| < 1/2\right].
\]
We approximate this probability with a Normal distribution, so that we have
\[
\Pr\left[\left| \tfrac{1}{30} \sum_{i=1}^{30} X_i \right| < 1/2\right] \approx \Pr\left[ |\N(0,1)| \leq \frac{\diffpper \sqrt{30}}{4 \sqrt{2}} \right].
\]
Our aim is to reduce the chance of this attack, while also not adding too much noise to each count.  We then sought to ensure roughly a 90\% chance of no such attack.  Plugging in $\diffpper = 0.15$ leads to about an 11\% probability of this attack being successful.  Also recall that this attack will only be successful if the data does not change for the query over 30 days.  

\subsubsection{Determining Information Budget}

One of the primary reasons for exploring differential privacy for this use case was differencing attacks.  Consider the setting where an analyst asks two queries where they know that there is a single person different between the two in the unrestricted sensitivity setting.  Despite the noise that we add to each count in each query, it is possible that the noise is small for some elements so it is clear what the true count was before noise.  This happens when the noise is smaller than $1/2$ for a single count.  Hence we want to compute the following probability, which can be written in terms of an exponential random variable $\texttt{Exp}$,
\[
p \defeq \Pr[|\lap(2/\diffpper) | < 1/2].  = \Pr[\texttt{Exp}(\diffpper/2) < 1/2].
\]

For a top-$k$ query, we can expect to see $p k$ (assume integer valued) many elements that have noisy count within half of the true count.  If an adversary were to do a differencing attack, she would ask another top-$k$ query and there would be fresh noise added.  Hence, there would again be an expected $p k$ many noisy counts that are close to the true counts.  The adversary does not know which elements in both queries have noisy counts within $1/2$ of the true count, but we want to make sure these sets of elements do not overlap.  If these elements with small noise do overlap in the two top-$k$ results, then a difference between the two results might show the actual difference between the two.  

We now want to prevent the possibility of these small count elements to overlap in the two top-$k$ queries. Let’s fix the set of $pk$ elements that had noisy counts within half of the true counts in the first top-$k$.  In the second top-$k$, we know that again there are expected to be  $pk$ elements, but where they are is random.  Hence we get a uniformly random set of $pk$ elements in the second top-$k$ result and want to know what is the expected size of the intersection between this set of $pk$ elements and the $pk$ elements from the first top-$k$.  The probability that this intersection is of size $s$ is the following:
\[
\frac{{ pk \choose s} {k - pk \choose pk - s}}{ {k \choose pk} }.
\]
This is a hypergeometric distribution and has expectation $p^2k$.  To reduce the chance that these two sets of size $pk$ overlap, we set the expected size of the intersection to be less than $1$, i.e. $k<1/p^2$.  

Recall from the previous attack that we have $\diffpper = 0.15$, we get $p=0.0368$ and we then can use $k=738$.  Our information cost is the total number of results that can be returned plus one for the optimized threshold calculation in the unknown domain setting, which would bound the information cost by $2k+2$, from these two top-$k$ results.  Note that we also return counts for the elements that we find, which also increases the total information cost by $2k$.  Hence, we can set information budget $k^* = 4 \cdot 738 +2 \approx 3000$, which from Figure~\ref{fig:budgetUsage}, we see that more than 93\% of analysts would not be impacted.

\subsubsection{Determining $\delta$ and Call Budget}

Our unknown domain algorithms include a threshold so that elements with a single user contribution (unique count) should not be shown in any result.  However, noise is added to the threshold to ensure differential privacy, so we want to be able to control the chance that a unique count with noise becomes larger than the noisy threshold.  Hence, we want to bound the probability that this can occur for a single count and then take a union bound over all $\bar{d}$ counts that can be returned from Pinot.  For the $\rT{}$ algorithm, we will write $Z_1, Z_2 \sim \gum(1/\diffpper)$ and $h_{(i)}$ as the $i$th ranked count in the input histogram $\bbh$.  We then consider the following probability of a \emph{bad single event} $B_i$ where $i \leq \bar{d}$
\[
\Pr[B_i] \defeq \Pr[h_{(i)} + Z_1 > h_{(\bar{d} + 1)} + 1 + \log(\bar{d}/\delta)/\diffpper + Z_2].
\]
The worst case scenario is where every element in the histogram has count equal to 1, meaning only one user contributed to the counts and so each count is a unique count.  Note that in $\rT{}$, the threshold actually uses $\log(\bar{k}/\delta)$ where there is an additional step to optimize from $\bar{d}$ to a smaller $\bar{k}$, but here we use the larger $\bar{d}$ to be more pessimistic.  
Noting that the difference $Z_1 - Z_2$ is distributed as a logistic random variable $\texttt{Log}$, we have the following
\begin{align*}
\Pr[B_i] & \leq \Pr\left[\texttt{Log}(1/\diffpper) > 1 + \frac{\log(\bar{d}/\delta)}{\diffpper}\right] \\
& = 1 - \frac{1}{1 + e^{-\diffpper} \delta / \bar{d}}. 
\end{align*}
We then want to bound the event that any of the $\bar{d}$ elements can appear above the threshold, hence 
\begin{align*}
\Pr[\cup_{i=1}^{\bar{d} } B_i] & \leq \sum_{i=1}^{\bar{d}}  \Pr[B_i] \\ 
& \leq \bar{d} \cdot \left(1 - \frac{1}{1 + e^{-\diffpper} \delta / \bar{d}} \right) \leq \delta e^{-\diffpper}
\end{align*}
Now, we want to make sure that the chance of this occurring over all queries is small.  We use $\ell^*$ to denote the call cost which is how many top-$k$ queries in the unknown domain setting are used and are the only types of queries that risk showing results that a single user contributed.  Hence we will write $B_i^\ell$ to be a bad event for count $i \leq \bar{d}_\ell$ in the $\ell$th query and take a union bound over all $\ellmax$ such queries.
\[
\Pr[\cup_{\ell=1}^{\ellmax} \cup_{i=1}^{\bar{d}_\ell} B_i^\ell] \leq  \ell^* \delta e^{-\diffpper}.
\]

We aim for a 1 in a hundred million chance and then also reference the cost budget usage in Figure~\ref{fig:budgetUsage}.  Setting $\ell^* = 30$ would not impact more than 95\% of analysts and using $\delta = 10^{-10}$ with $\diffpper = 0.15$ gives the overall probability bound of close to 1 in 400 million.

\begin{table}[H]
\makeatletter
\newcommand\footnoteref[1]{\protected@xdef\@thefnmark{\ref{#1}}\@footnotemark}
\makeatother
\begin{minipage}{\textwidth}
\begin{tabular}{|c|c|c|c|c|}
\rowcolor[HTML]{C0C0C0} 
\textbf{Use Case} & \textbf{\begin{tabular}[c]{@{}c@{}}Privacy\\ Model  \end{tabular}} & \textbf{\begin{tabular}[c]{@{}c@{}}DP Algorithm\\ Parameters \\ $(\epsilon,\delta) $ \end{tabular}} & \textbf{\begin{tabular}[c]{@{}c@{}}Daily DP \\ Parameters \\  $(\epsilon_\text{day},\delta_\text{day})$\end{tabular}} & \textbf{\begin{tabular}[c]{@{}c@{}} Monthly DP\\ Parameters\\ $(\epsilon_\text{month},\delta_\text{month}) $\end{tabular}} \\
\hline
\begin{tabular}[c]{@{}c@{}}Google - RAPPOR \citep{ErlingssonPiKo14}\\ Chrome Homepages\end{tabular}               
& Local\footnoteref{Memoization}   
& $(0.534, 0)$                                           
& \begin{tabular}[c]{@{}c@{}}$(25.63,0)$ \\ 30 min reporting \end{tabular}                      
& $ (769, 0)$                  
\\
\hline
\begin{tabular}[c]{@{}c@{}}Apple - Safari\\ Domains \citep{ApplePrivacy17}  \end{tabular}                                                                 
& Local          
& $(4, 0)$                                                            
& $(8, 0)$\footnoteref{ApplePage}                                                       
& $(240, 0) $                                                                        
\\
\hline
Apple - Emojis   \citep{ApplePrivacy17}                                                              
& Local            
& $(4, 0)$                                                            
& $(4,0)$\footnote{\label{ApplePage}\url{https://www.apple.com/privacy/docs/Differential_Privacy_Overview.pdf}}                                                       
& $(120,0) $                                                                    
\\
\hline
\begin{tabular}[c]{@{}c@{}}Microsoft - Telemetry\\ Collection per App \citep{DingKuYe17}\end{tabular} 
& Local\footnote{\label{Memoization}Memoization ensures that repeated records have much smaller overall privacy loss.  This table shows the overall privacy loss for users that generate distinct records.}  
& $(0.686,0) $                                                                           
&  \begin{tabular}[c]{@{}c@{}}$(2.74,0)$ \\ 6 hour reporting\end{tabular}                                                                                 
&  \begin{tabular}[c]{@{}c@{}}$(82.2,0)$ \end{tabular}                                               
\\
\hline
\begin{tabular}[c]{@{}c@{}}Google - Mobility\\ Reports \citep{GoogleMobility} \end{tabular}                                                     
& Global           
& \begin{tabular}[c]{@{}c@{}}$(0.11,0)$ or \\ $(0.22,0)$ \end{tabular}                                                                                     
& $(2.64,0)$\footnote{We add up the parameters from \citep{GoogleMobility} in daily visits in public places $(\epsilon = 1.74)$, residential $(\epsilon = 0.44)$, and workplaces $(\epsilon = 0.44)$.}                      
& $(79.2,0)$                                      
\\
\hline
\begin{tabular}[c]{@{}c@{}}Microsoft - Assistive \\ AI\footnote{\url{https://www.microsoft.com/en-us/research/group/msai/articles/assistive-ai-makes-replying-easier-2}} \end{tabular}                                                        
& Global           
& $(4,10^{-7})$                                                              
&  Not available                                                       
&  Not available                                                                      
\\
\hline
\begin{tabular}[c]{@{}c@{}}LinkedIn - Audience \\ Engagement API\footnote{ pre-GA status}\end{tabular}      
& Global           
& $(0.15,10^{-10})$                                                                
&  ---                                                
& \begin{tabular}[c]{@{}c@{}} $(34.9,7\times10^{-9})$ \end{tabular}                            
\\
\hline
\end{tabular}
\caption{Privacy parameters for existing deployments of privacy systems that use differentially private algorithms.  Note that some parameters can be improved with a slightly larger $\delta_\text{month}$.\label{table:epsilons}}
\end{minipage}
\end{table}

\subsection{Overall Privacy Guarantee}

With these proposed parameter values for $\diffpper, \delta, \kmax,\ellmax$ and applying Theorem~\ref{thm:final_parameters}, we get a final $(34.9, 7 \times 10^{-9})$-DP monthly guarantee.  We want to put this guarantee in the context of other deployments of privacy systems that use differential privacy.  Note that some privacy systems that adopted differential privacy are in the local privacy model, which is a more stringent privacy setting than the global model that we consider here.\footnote{It is possible to compute the global DP parameters when local DP is used, see \cite{ErlingssonFeMiRaTaTh19}, \cite{BalleBeGaNi19}, and \cite{CheuSmUlZeZh19}}  Further, we used parameters that were publicly released, so they may differ from current deployments.  

In Table~\ref{table:epsilons}, we identify different use cases that have adopted differential privacy and have released their privacy parameters along with how often data and reports are refreshed, thus allowing us to compute daily and monthly DP parameters.  We focused on deployments where data is continually updated in both local and global models of privacy.  It is important to point out that each privacy system includes additional safeguards beyond differentially private algorithms.  Some of these differences include subsampling users, permuting records, and the use of memoization, as in Google's RAPPOR \citep{ErlingssonPiKo14} and Microsoft's telemetry collection \citep{DingKuYe17}, to prevent longitudinal attacks when the same record is privatized with fresh noise repeatedly.  What we account for in the table is if a user's data changes in the local model then fresh noise would be added to each result and hence the privacy loss accumulates.


\section{Conclusion}
We have presented a privacy system that incorporates state of the art algorithms for releasing histograms and top-$k$ results in a differentially private way.  Also, we have shown how we track the privacy budget for multiple analysts that can query our API.  Combining the budget management service with DP algorithms allows us to make strong privacy guarantees of the overall system for any external partner that is allowed to make multiple, adaptively selected queries.  This privacy system allows us to track the amount of information that is being released to external partners via the API in a precise way so that we can make informed decisions in how we can balance privacy safeguards with the usefulness of the product.  We hope that this work demonstrates the feasibility of providing rigorous DP guarantees in systems that can scale.


\paragraph{Acknowledgements}
We would like to thank Adrian Cardoso, Mark Cesar, Stephen Lynch, Sofus Macskassy, Koray Mancuhan, Sajjad Moradi, Sergey Yekhanin, and the entire LinkedIn Data Science Applied Research team for their helpful feedback on this work.  Further, we thank Igor Perisic and Ya Xu for their support throughout this project.  

\clearpage 

\bibliographystyle{IEEEtran}
\bibliography{bib}

\clearpage

\appendix

\section{Omitted Analysis for Section~\ref{sect:restricted_sensitivity} \label{app:LapMax}}

We now go through the analysis for Algorithm~\ref{algo:genLimitDomLap}, in particular the proof of Lemma~\ref{lem:DDR19}.  The differences between Algorithm~\ref{algo:genLimitDomLap} and the version that appeared as Algorithm 4 in \cite{DurfeeRo19} is that we are returning counts as well as indices, we do not limit the number of outcomes to be at most $k$ (since it is not a parameter), and we allow for counts to increase or decrease by $\tau\geq 1$ in neighboring datasets.  As we will mainly be borrowing the analysis in \cite{DurfeeRo19} we will change $\bar{d}$ to $\bk$ to better match the statements in that work. We then introduce the following algorithm, which we will show has the same distribution as $\rTE{\Delta,\bk,\tau}(\bbh)$.

\begin{definition}[Limited Histogram Report Noisy Counts]\label{defn:rnmk}
We assume that the $\ell_\infty$ sensitivity between any neighboring histograms is $\tau$.  We define the limited histogram report noisy counts to be $\rnm{\bk,\tau}$ that takes as input a histogram along with a domain set of indices and returns an ordered list of counts with the corresponding index, where
$
\rnm{\bk,\tau}(\bbh,\cD) =
\left( \{v_{(1)}, i_{(1)}\},...,\{ v_{(\bot)} ,\bot \}\right) 
$
and $(v_{(1)},...,v_{(\bot)})$ is the sorted list of $v_i = h_{(i)} + \lap(2\tau\Delta/\diffpper)$ for each $i \in \cD$ and $v_\bot = h_{(\bk + 1)} + \tau \left( 1 + 2\Delta\log(\Delta/\hat\delta)/\diffpper \right) + \lap(2 \tau\Delta/\diffpper)$ with $\hat\delta$ given in Algorithm~\ref{algo:genLimitDomLap} as a function of $\delta >0$.
\end{definition}

We have the following that connects $\rnm{\bk,\tau}$ with $\rTE{\Delta,\bk,\tau}$.
\begin{lemma}\label{lem:lap_equiv}
For any histogram $\bbh$, we have that both mechanisms $\rnm{\bk,\tau}(\bbh,\domain{\bk}{\bbh})$ and $\rTE{\bk,\tau}(\bbh)$ produce outcomes that are equal in distribution.
\end{lemma}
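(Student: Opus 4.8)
The plan is to observe that the two mechanisms are essentially the same procedure written with different bookkeeping conventions, so the claim reduces to matching up the noise scales, the deterministic threshold shift, and the truncation rule. First I would unpack the notation $\domain{\bk}{\bbh}$: it denotes the set of indices carrying the $\bk$ largest counts of $\bbh$ (under the fixed tie-breaking rule used throughout), so that enumerating $h_{(i)}$ over $i \in \domain{\bk}{\bbh}$ is exactly enumerating the sorted top-$\bk$ counts $h_{(1)} \ge \cdots \ge h_{(\bk)}$ of $\bbh$ --- the same values that the first loop of Algorithm~\ref{algo:genLimitDomLap} perturbs once we identify its access parameter with $\bk$.

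Next I would check that the three defining ingredients coincide. In $\rnm{\bk,\tau}$ from Definition~\ref{defn:rnmk}, each in-domain count receives independent $\lap(2\tau\Delta/\diffpper)$ noise; in $\rTE{\Delta,\bk,\tau}$ each $v_i$ with $i \le \bk$ receives independent $\lap(2\tau\Delta/\diffpper)$ noise --- identical. The null element's noisy value equals $h_{(\bk+1)} + \tau(1 + 2\Delta\log(\Delta/\hat\delta)/\diffpper) + \lap(2\tau\Delta/\diffpper)$ in both (the definition of $\rnm{\bk,\tau}$ explicitly inherits the $\hat\delta$ of Algorithm~\ref{algo:genLimitDomLap}, so the deterministic shift matches) with the same noise scale. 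Finally, both mechanisms sort $\{v_i\} \cup \{v_\bot\}$ and output the prefix of this sorted list up to and including $v_\bot$, paired with the corresponding indices; here I would just confirm that the ``up to $v_\bot$'' truncation in Definition~\ref{defn:rnmk} and the ``sorted list until $v_\bot$'' step in Algorithm~\ref{algo:genLimitDomLap} describe the same returned set.

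With these identifications in place the conclusion is immediate: the random vector $\bigl(\{v_i\}_{i \in \domain{\bk}{\bbh}}, v_\bot\bigr)$ has the same joint law under both mechanisms --- a tuple of independent Laplace variates shifted by the same deterministic quantities determined by $\bbh$ --- and the reported outcome is a fixed measurable function of this vector together with $\bbh$, so equality in distribution of the outputs follows. The only place that needs care is the bookkeeping: making sure $\domain{\bk}{\bbh}$ genuinely picks out the top-$\bk$ indices so that the two noise vectors can be coupled index-by-index, and that the tie-breaking and truncation conventions agree. I do not expect any real probabilistic obstacle; this lemma is a notational bridge that lets us import the privacy analysis of \cite{DurfeeRo19} via Lemma~\ref{lem:DDR19}.
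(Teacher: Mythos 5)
Your proposal is correct and matches the paper's treatment: the paper states this lemma without a written proof, since Definition~\ref{defn:rnmk} is constructed precisely so that instantiating $\rnm{\bk,\tau}$ on the top-$\bk$ domain reproduces, term by term, the noise scales, the shifted threshold $v_\bot$, and the sort-and-truncate-at-$v_\bot$ output rule of Algorithm~\ref{algo:genLimitDomLap}. Your index-by-index coupling of the Laplace variates and the observation that the output is a fixed measurable function of $\bigl(\{v_i\}, v_\bot\bigr)$ is exactly the intended (definitional) argument.
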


If we fix a domain $\cD$ beforehand, then we have the following privacy statement.  Note that the privacy of $\rnm{\bk,\tau}(\bbh, \cD)$ follows from the Laplace mechanism \cite{DworkMcNiSm06} being $\diffpper$-DP.  This is what allows us to output the counts as well as the indices.  We just need to ensure that $i_{(\bk + 1)} \notin \cD$ because then if it was, then changing one index would change the count of both $h_{(\bk + 1)}$ and $h_{\bot}=h_{(\bk + 1)} + \tau \left( 1 +  \Delta\log(\Delta/\delta)/\diffpper \right)$.
\begin{lemma}\label{lem:lap_restricted}
For any fixed $\cD \subseteq [d]$ and neighbors $\bbh,\bbh'$ such that $i_{(\bk + 1)},i'_{(\bk + 1)} \notin \cD$, then for any set of outcomes $T$,

 \begin{align*}
 & \Pr[\rnm{\bk,\tau}(\bbh, \cD) \in T] \\
 & \qquad \leq e^{\diffpper/2}\Pr[\rnm{\bk,\tau}(\bbh',\cD) \in T].
\end{align*}
\end{lemma}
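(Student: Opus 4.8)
The plan is to recognize the mechanism $\bbh\mapsto\rnm{\bk,\tau}(\bbh,\cD)$, for a fixed domain $\cD$, as a Laplace mechanism followed by post‑processing, and to prove the bound pointwise on outcomes. With $\cD$ held fixed, the outcome is a deterministic function of the labeled vector of independent draws $\big(\{v_i\}_{i\in\cD},\,v_\bot\big)$, where $v_i=h_i+\lap(2\tau\Delta/\diffpper)$ and $v_\bot = h_{(\bk+1)}+s+\lap(2\tau\Delta/\diffpper)$ with $s=\tau\big(1+2\Delta\log(\Delta/\hat\delta)/\diffpper\big)$ a constant not depending on the histogram: one sorts the noisy values, keeps the labels whose value exceeds $v_\bot$, and appends $(\bot,v_\bot)$. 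Hence it suffices to bound the likelihood ratio of this partially truncated vector.

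Fix an outcome $o$, recording a subset $A\subseteq\cD$ of reported elements with sorted values $\{u_i\}_{i\in A}$, all exceeding the threshold value $u_\bot$. Writing $b=2\tau\Delta/\diffpper$, the density of $o$ under $\bbh$ factors as
\[
\Big(\prod_{i\in A}p_{\lap}(u_i-h_i;b)\Big)\cdot p_{\lap}\big(u_\bot-h_{(\bk+1)}-s;b\big)\cdot\prod_{i\in\cD\setminus A}\Pr\big[\lap(b)\le u_\bot-h_i\big],
\]
since each reported element and the threshold contribute a density at the realized value while each un‑reported element of $\cD$ contributes the probability that its noisy value fell below the threshold. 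Dividing by the same quantity for $\bbh'$, using $\big|\,|u-h|-|u-h'|\,\big|\le|h-h'|$ on the density factors and the fact that $t\mapsto\log\Pr[\lap(b)\le t]$ is $1/b$‑Lipschitz on the cdf factors, the log‑likelihood ratio is at most
\[
\tfrac{1}{b}\Big(\sum_{i\in\cD}|h_i-h'_i|\;+\;\big|h_{(\bk+1)}-h'_{(\bk+1)}\big|\Big).
\]

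The crux, and the step I expect to be the main obstacle, is bounding this parenthesised sum by $\Delta\tau$, so that the ratio becomes $e^{\Delta\tau/b}=e^{\diffpper/2}$. The $\bot$‑term is at most $\tau$ since the $(\bk+1)$‑th order statistic is $1$‑Lipschitz in $\|\cdot\|_\infty$, so $|h_{(\bk+1)}-h'_{(\bk+1)}|\le\|\bbh-\bbh'\|_\infty\le\tau$; the $\cD$‑sum is at most $\tau$ times the number of coordinates of $\cD$ that differ, hence at most $\Delta\tau$ on its own. One has to argue that the differing $\cD$‑coordinates together with the $\bot$‑coordinate constitute at most $\Delta$ ``charged'' coordinates in total rather than $\Delta+1$, and this is exactly where the hypothesis $i_{(\bk+1)},i'_{(\bk+1)}\notin\cD$ enters: it guarantees the element carrying the threshold count is never among the released $\cD$‑coordinates, so that no single differing element simultaneously moves a $\cD$‑coordinate and the $\bot$‑coordinate --- the ``double counting'' flagged in the remark preceding the lemma. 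The delicate part is that a change confined to $\cD$ can still reorder the histogram around the (unchanged) threshold element, and ties in the ranking must be handled; nailing down this accounting is the real work, the rest being the routine Laplace‑mechanism computation.

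Equivalently, one can package everything by verifying that the map $\bbh\mapsto\big(\{h_i\}_{i\in\cD},\,h_{(\bk+1)}+s\big)$ has $\ell_1$‑sensitivity at most $\Delta\tau$ over the neighbor pairs satisfying the hypothesis, and then invoking $\diffpper/2$‑DP of the Laplace mechanism~\cite{DworkMcNiSm06} together with closure of differential privacy under post‑processing (sorting the noisy values and truncating at $v_\bot$).
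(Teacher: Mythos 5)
Your framework is the same as the paper's: for fixed $\cD$ the mechanism is just the Laplace mechanism applied to the vector $\left(\{h_i\}_{i\in\cD},\, h_{(\bk+1)}+s\right)$ followed by post-processing (the paper disposes of the lemma with essentially that one-line remark), and your factorization of the truncated outcome's density, with cdf factors for the unreported $\cD$-elements, is a legitimate way to make it precise. The problem is exactly the step you flag as ``the real work'': bounding $\sum_{i\in\cD}|h_i-h'_i| + |h_{(\bk+1)}-h'_{(\bk+1)}|$ by $\Delta\tau$. The hypothesis $i_{(\bk+1)},i'_{(\bk+1)}\notin\cD$ alone does not give this, and the intuition you offer (no differing element is charged to both a $\cD$-coordinate and the $\bot$-coordinate) does not close it: a change confined to a coordinate inside $\cD$ can reorder the histogram and move the \emph{value} $h_{(\bk+1)}$ even though the index carrying rank $\bk+1$ stays outside $\cD$ in both histograms. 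Concretely, take $\Delta=1$, $\tau=3$, $\bk=2$, counts $20,18,17$ on indices outside $\cD$, and a single $\cD$-index with count $19$ in $\bbh$ and $16$ in $\bbh'$: then $i_{(3)},i'_{(3)}\notin\cD$, yet the mean vector of the noised coordinates shifts in $\ell_1$ by $3+1=4>\Delta\tau=3$, and on outcomes where the $\cD$-element and the threshold are both reported with large noisy values the density ratio is $e^{4\diffpper/6}=e^{2\diffpper/3}>e^{\diffpper/2}$. So the missing accounting is not a routine matter of ties; under the lemma's literal hypotheses it cannot be carried out, and your proposed charging argument would fail there.

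The way to complete the proof is to use the structure of the only $\cD$ the lemma is actually applied to, namely $\cD^{\diffpper}=\domainE{\bk}{\bbh}\cap\domainE{\bk}{\bbh'}$ in Lemma~\ref{lem:lap_eps} (for which the index-exclusion hypothesis holds automatically). Then a case analysis gives the $\Delta\tau$ bound: if some changed coordinate lies outside $\cD$, at most $\Delta-1$ changed coordinates are charged to the $\cD$-sum while $|h_{(\bk+1)}-h'_{(\bk+1)}|\le\|\bbh-\bbh'\|_\infty\le\tau$, for a total of at most $\Delta\tau$; if instead every changed coordinate lies in $\cD$, hence in the top-$\bk$ of \emph{both} histograms, then the changed elements occupy the same number of top-$\bk$ slots in both, so the elements outside the top-$\bk$ are unchanged and form the same multiset of counts, whence $h_{(\bk+1)}=h'_{(\bk+1)}$ and the threshold term vanishes. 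With that accounting in place, your Laplace computation (or the equivalent sensitivity-plus-post-processing packaging) does yield $e^{\diffpper/2}$; without it, or some equivalent restriction on $\cD$, the stated constant is out of reach.
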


As was done in Durfee and Rogers \cite{DurfeeRo19}, we can carefully account for the \emph{good}  (can bound the privacy loss) and \emph{bad} (can bound these events with small probability) sets. 
Note that the outcome set of $\rTE{\Delta,\bk,\tau}$ is a superset of Algorithm 4 in \cite{DurfeeRo19} when $k = \bk$, and it is straightforward to see that these algorithms have the same distribution with respect to index output (ignoring the counts output from $\rTE{\Delta,\bk,\tau}$). 
Therefore, all the bounds on the \emph{bad} outcomes will still hold for our setting, and the analysis then follows from results in Section 6 of \cite{DurfeeRo19}, where we state each result here.  

\begin{definition}
Given two neighboring histograms $\bbh,\bbh'$, we define $\cS_{\lap}$ as the outcome set of $\rTE{\Delta,\bk,\tau}(\bbh,\domain{\bk}{\bbh})$ (both indices and counts) and the outcome set of $\rTE{\Delta,\bk,\tau}(\bbh',\domain{\bk}{\bbh'})$ as $\cS'_{\lap}$.

We then define the \emph{bad outcomes} as 
$
\cS^\delta_{\lap} \defeq \cS_{\lap} \setminus \cS'_{\lap}
$
and 
$ \cS'^\delta_{\lap} \defeq \cS'_{\lap} \setminus \cS_{\lap}.
$
\label{defn:eps_delt_setsLAP}
\end{definition}

\begin{lemma}\label{lem:lap_del}
For $\Delta$-restricted sensitivity neighbors $\bbh,\bbh'$, we have

\begin{align}
& \Pr[\rnm{\bk,\tau}(\bbh,\domainE{\bk}{\bbh}) \in \cS^\delta_{\lap}] \nonumber \\
& \qquad \leq  \delta / 4\cdot \left( 3 + \ln(\Delta/\delta) \right) \eqdef \bar{\delta}
\label{eq:barDelta}
\end{align}
\end{lemma}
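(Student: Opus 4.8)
The statement is the $\tau$-generalized form of a bound in Section~6 of \cite{DurfeeRo19}, and the plan is to reduce it to that bound. By Lemma~\ref{lem:lap_equiv} it suffices to control $\Pr[\rTE{\Delta,\bk,\tau}(\bbh)\in\cS^\delta_{\lap}]$. The first thing to check is that enlarging the output to carry the Laplace-noised counts does not enlarge $\cS^\delta_{\lap}=\cS_{\lap}\setminus\cS'_{\lap}$: since each reported count is $h_{(i)}+\lap(2\tau\Delta/\diffpper)$ with fully supported continuous noise, and since $\bot$ is a common symbol, an outcome produced by the $\bbh$-run can fail to lie in the support of the $\bbh'$-run only through the identity of a reported \emph{index}, namely some $i$ with $i\in\domainE{\bk}{\bbh}$ but $i\notin\domainE{\bk}{\bbh'}$ (run on $\bbh'$, such an $i$ is never a candidate at all). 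Hence the distribution over the reported indices is exactly that of Algorithm~4 of \cite{DurfeeRo19}, and $\Pr[\cS^\delta_{\lap}]\le\sum_i\Pr_{\bbh}[v_i>v_\bot]$, the sum ranging over the ``dangerous'' indices $i\in\domainE{\bk}{\bbh}\setminus\domainE{\bk}{\bbh'}$.

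Next I would establish the combinatorial facts that make this sum small, which is where the work lies. Because $\|\bbh-\bbh'\|_0\le\Delta$ and $\|\bbh-\bbh'\|_\infty\le\tau$, an index that sits in the top-$\bk$ of $\bbh$ but not of $\bbh'$ must have been displaced by at least one of the $\le\Delta$ altered coordinates climbing past it; counting the coordinates of $\bbh$ whose value is at least $h_{(i)}-O(\tau)$ then forces $h_{(\bk+1)}\ge h_{(i)}-O(\tau)$ and forces the rank of $i$ in $\bbh$ to be at least $\bk-\Delta+1$. Consequently there are at most $\Delta$ dangerous indices, and each has true count $h_{(i)}\le h_{(\bk+1)}+O(\tau)$ — exactly the quantity that the deterministic shift $\tau\bigl(1+2\Delta\ln(\Delta/\hat\delta)/\diffpper\bigr)$ built into $v_\bot$ is designed to dominate.

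Finally I would bound each term. For a dangerous $i$, writing $N_1,N_2\stackrel{i.i.d.}{\sim}\lap(2\tau\Delta/\diffpper)$, the count bound gives $\Pr_\bbh[v_i>v_\bot]\le\Pr\bigl[N_1-N_2>(2\tau\Delta/\diffpper)\ln(\Delta/\hat\delta)-O(\tau)\bigr]$; plugging in the closed form $\Pr[\lap(b)-\lap(b)>t]=\tfrac14(2+t/b)e^{-t/b}$ with $t/b\approx\ln(\Delta/\hat\delta)$ yields a per-index bound of roughly $\tfrac{\hat\delta}{4\Delta}\bigl(2+\ln(\Delta/\hat\delta)\bigr)$, and summing over the $\le\Delta$ dangerous indices gives $\tfrac{\hat\delta}{4}\bigl(2+\ln(\Delta/\hat\delta)\bigr)$, which is at most $\bar\delta=\tfrac{\hat\delta}{4}\bigl(3+\ln(\Delta/\hat\delta)\bigr)$ — the $2\to3$ slack in the statement absorbing the lower-order $O(\tau)$ shift and the small correction from using the difference of two Laplace variables in place of one. (Here $\hat\delta$ is the quantity solved for in Algorithm~\ref{algo:genLimitDomLap}.)

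The main obstacle is the combinatorial claim of the second paragraph: pinning down both that at most $\Delta$ indices can drop out of the top-$\bk$ under a $\Delta$-restricted-sensitivity change and that every such index has true count within $O(\tau)$ of $h_{(\bk+1)}$, while carefully handling ties in the sort and the case in which the dangerous index is itself one of the altered coordinates (which only degrades the constant hidden in the $O(\tau)$, not the shape of the bound). Everything downstream — the Laplace tail estimate and the union bound — is routine, and the whole argument parallels the corresponding statement in Section~6 of \cite{DurfeeRo19} once the first paragraph's observation about the added counts is in place.
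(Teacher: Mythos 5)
Your first paragraph is, in substance, the paper's entire proof: because the Laplace noise attached to the counts has full support, whether an outcome lies in $\cS^\delta_{\lap}$ is determined by the reported indices alone, the index marginal of $\rnm{\bk,\tau}(\bbh,\domainE{\bk}{\bbh})$ coincides with Algorithm 4 of Durfee--Rogers, and the paper simply imports the bad-outcome bound from Section 6 of that work (your reading of the lemma's $\delta$ as the $\hat\delta$ solved for in Algorithm~\ref{algo:genLimitDomLap} is also the intended one). Your attempt to additionally reprove the imported bound has the right skeleton: at most $\Delta$ dangerous indices, each with true count close to $h_{(\bk+1)}$, the tail $\Pr[\lap(b)-\lap(b)>t]=\tfrac14\bigl(2+t/b\bigr)e^{-t/b}$, and a union bound. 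The one step that does not hold as written is exactly the one you wave at: you cannot rely on the ``$2\to3$'' slack to absorb a $2\tau$ count deficit. If a dangerous index could sit $2\tau$ above $h_{(\bk+1)}$, the exponent drops to $\ln(\Delta/\hat\delta)-\diffpper/(2\Delta)$ and the summed bound becomes $\tfrac{\hat\delta}{4}\,e^{\diffpper/(2\Delta)}\bigl(2+\ln(\Delta/\hat\delta)-\diffpper/(2\Delta)\bigr)$, which exceeds $\tfrac{\hat\delta}{4}\bigl(3+\ln(\Delta/\hat\delta)\bigr)$ whenever roughly $\diffpper\bigl(1+\ln(\Delta/\hat\delta)\bigr)>2\Delta$ --- a regime that is realistic here given $\hat\delta\approx 10^{-11}$ and small $\Delta$.

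The repair is to tighten the combinatorial step rather than lean on the constant. Under the paper's neighboring relation (presence or absence of one member's data) all changed coordinates move in the same direction, so for any $i\in\domainE{\bk}{\bbh}\setminus\domainE{\bk}{\bbh'}$ and any of the at least $\bk$ indices $j$ ranked above $i$ in $\bbh'$ one gets $h_j\ge h_i-\tau$ in both the addition and removal directions (e.g.\ $h_j\ge h'_j-\tau\ge h'_i-\tau\ge h_i-\tau$ when the member is added, and $h_j\ge h'_j\ge h'_i\ge h_i-\tau$ when removed); hence $h_{(\bk+1)}\ge h_i-\tau$, i.e.\ the slack is always at most $\tau$, which the deterministic $+\tau$ built into $v_\bot$ cancels exactly. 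Then $t/b\ge\ln(\Delta/\hat\delta)$, each dangerous index contributes at most $\tfrac{\hat\delta}{4\Delta}\bigl(2+\ln(\Delta/\hat\delta)\bigr)$, and the union bound over at most $\Delta$ such indices gives $\tfrac{\hat\delta}{4}\bigl(2+\ln(\Delta/\hat\delta)\bigr)\le\bar\delta$. (Your side claim that a dangerous index must have rank at least $\bk-\Delta+1$ in $\bbh$ is not needed and is delicate with ties; the two load-bearing facts are the cardinality bound and the $\tau$-slack bound.) With that one repair --- or by doing what the paper does and invoking the Durfee--Rogers lemma verbatim, since the index marginal is identical --- your argument is complete.
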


\begin{lemma}\label{lem:lap_eps}
For any neighboring histograms $\bbh,\bbh'$ and for any $S \subseteq \cS_{\lap} \cap \cS'_{\lap}$, we let $\cD^\diffpper = \domainE{\bk}{\bbh} \cap \domainE{\bk}{\bbh'}$ and we must have the following for $\bar{\delta}$ given in \eqref{eq:barDelta}
\begin{align*}
& \Pr[\rnm{\bk,\tau}(\bbh, \domainE{\bk}{\bbh}) \in S] \\
& \qquad  \leq \Pr[\rnm{\bk,\tau}(\bbh,\cD^\diffpper) \in S] \\
&\qquad \leq \Pr[\rnm{\bk,\tau}(\bbh,\domainE{\bk}{\bbh}) \in S] + \bar{\delta}
\end{align*}
\end{lemma}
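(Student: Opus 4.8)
The plan is to derive both inequalities from a single coupling of the two mechanisms $\rnm{\bk,\tau}(\bbh,\domainE{\bk}{\bbh})$ and $\rnm{\bk,\tau}(\bbh,\cD^\diffpper)$, and then to read off the additive loss of $\bar{\delta}$ from Lemma~\ref{lem:lap_del}, which already controls exactly the bad event that comes up.

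First I would record the relevant structure. Write $\domainE{\bk}{\bbh}=\cD^\diffpper\cup E$ (a disjoint union), where $E\defeq\domainE{\bk}{\bbh}\setminus\domainE{\bk}{\bbh'}$ is the set of indices among the top $\bk$ of $\bbh$ but not of $\bbh'$; since $\bbh$ and $\bbh'$ differ in at most $\Delta$ coordinates this set is small, and it is precisely the ``escape'' set that Lemma~\ref{lem:lap_del} is tailored to. The key observation is that in $\rnm{\bk,\tau}$ the noisy value attached to an index $i$ is $h_i$ plus fresh Laplace noise, and $v_\bot$ equals $h_{(\bk+1)}+\tau\bigl(1+2\Delta\log(\Delta/\hat\delta)/\diffpper\bigr)$ plus fresh Laplace noise; none of these means depends on the domain set passed in. I would therefore couple the two runs by drawing one family $\{v_i : i\in\domainE{\bk}{\bbh}\}$ together with one $v_\bot$, and letting the $\cD^\diffpper$-run use the subfamily $\{v_i : i\in\cD^\diffpper\}$ with the same $v_\bot$.

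Next I would isolate the event $G\defeq\{\,v_j<v_\bot \text{ for every } j\in E\,\}$. On $G$ the two runs produce the identical output: the indices of $\cD^\diffpper$ whose noisy value exceeds $v_\bot$, in decreasing order, terminated by $(\bot,v_\bot)$ --- the $E$-values are all below threshold, hence invisible, and the shared $v_\bot$ and $\cD^\diffpper$-values fix the rest. Since $S\subseteq\cS_{\lap}\cap\cS'_{\lap}$, every outcome in $S$ mentions only indices lying in $\domainE{\bk}{\bbh}\cap\domainE{\bk}{\bbh'}=\cD^\diffpper$; hence on $G^{c}$ the $\domainE{\bk}{\bbh}$-run must report some index of $E$ above threshold and cannot land in $S$. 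Consequently
\begin{align*}
\Pr[\rnm{\bk,\tau}(\bbh,\domainE{\bk}{\bbh})\in S]
&= \Pr[\rnm{\bk,\tau}(\bbh,\domainE{\bk}{\bbh})\in S,\, G] \\
&= \Pr[\rnm{\bk,\tau}(\bbh,\cD^\diffpper)\in S,\, G]
\le \Pr[\rnm{\bk,\tau}(\bbh,\cD^\diffpper)\in S],
\end{align*}
which is the first inequality, and moreover the difference between the last two quantities equals $\Pr[\rnm{\bk,\tau}(\bbh,\cD^\diffpper)\in S,\, G^{c}]\le\Pr[G^{c}]$.

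It remains to bound $\Pr[G^{c}]$ by $\bar{\delta}$. The event $G^{c}$ is exactly that some index of $E$ escapes above $v_\bot$ in the $\domainE{\bk}{\bbh}$-run; since $E$ is disjoint from $\domainE{\bk}{\bbh'}$ and $\rnm{\bk,\tau}(\bbh',\domainE{\bk}{\bbh'})$ can only output indices from $\domainE{\bk}{\bbh'}\cup\{\bot\}$, this is precisely the event that the $\domainE{\bk}{\bbh}$-run lands in $\cS_{\lap}\setminus\cS'_{\lap}=\cS^{\delta}_{\lap}$. Hence $\Pr[G^{c}]=\Pr[\rnm{\bk,\tau}(\bbh,\domainE{\bk}{\bbh})\in\cS^{\delta}_{\lap}]\le\bar{\delta}$ by Lemma~\ref{lem:lap_del}, giving the second inequality. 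The main obstacle is the bookkeeping of the coupling: one has to verify that on $G$ the two runs are literally the same random variable (not merely equal in distribution) and that the hypothesis $S\subseteq\cS_{\lap}\cap\cS'_{\lap}$ genuinely confines all outcomes of $S$ to $\cD^\diffpper$-indices, so that the entire discrepancy between the two mechanisms is carried by $G^{c}$. (One could instead bound $\Pr[G^{c}]$ directly --- the difference of the Laplace variables in $v_j$ and $v_\bot$ is shifted apart by the $2\Delta\log(\Delta/\hat\delta)/\diffpper$ term built into $v_\bot$, so a Laplace-difference tail bound followed by a union bound over $E$ yields $\bar{\delta}$ --- but reusing Lemma~\ref{lem:lap_del} is cleaner.)
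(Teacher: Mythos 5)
Your proof is correct: the shared-noise coupling, the good event $G$ on which the two runs produce literally the same output, and the identification of $G^{c}$ with the big run landing in $\cS^\delta_{\lap}$ (so that Lemma~\ref{lem:lap_del} gives $\Pr[G^{c}]\le\bar{\delta}$) yield exactly the stated sandwich. The paper does not spell out a proof of this lemma but defers to Section~6 of Durfee and Rogers \cite{DurfeeRo19}, whose argument conditions on the noise of the common indices and the noisy threshold and factors out the probability that the extra indices in $\domainE{\bk}{\bbh}\setminus\domainE{\bk}{\bbh'}$ stay below it --- substantively the same argument as your coupling.
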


\begin{lemma}\label{lem:eps_delta_lap}
For any neighboring histograms $\bbh,\bbh'$ and any $S \subseteq \cS_{\lap}$, then for $\bar{\delta}$ given in \eqref{eq:barDelta},
\begin{align*}
\Pr& [\rnm{\bk,\tau}(\bbh,\domainE{\bk}{\bbh}) \in S] \\ 
& \qquad \leq e^{\diffpper/2} \Pr[\rnm{\bk,\tau}(\bbh',\domainE{\bk}{\bbh'}) \in S] \\
&\qquad \qquad \qquad + (e^{\diffpper/2} + 1)\bar{\delta}.
\end{align*}
\end{lemma}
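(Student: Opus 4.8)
The plan is to obtain Lemma~\ref{lem:eps_delta_lap} by chaining the three preceding lemmas through a standard \emph{good-event/bad-event} split of the target set $S$, following Section~6 of \cite{DurfeeRo19}. First I would write $S = (S\cap\cS'_{\lap})\,\cup\,(S\setminus\cS'_{\lap})$, so that
\[
\Pr[\rnm{\bk,\tau}(\bbh,\domainE{\bk}{\bbh})\in S] = \Pr[\rnm{\bk,\tau}(\bbh,\domainE{\bk}{\bbh})\in S\cap\cS'_{\lap}] + \Pr[\rnm{\bk,\tau}(\bbh,\domainE{\bk}{\bbh})\in S\setminus\cS'_{\lap}].
\]
Because $S\subseteq\cS_{\lap}$, the set $S\setminus\cS'_{\lap}$ is contained in $\cS^\delta_{\lap}=\cS_{\lap}\setminus\cS'_{\lap}$, so Lemma~\ref{lem:lap_del} immediately bounds the second term by $\bar\delta$. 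The first term ranges over $S_0\defeq S\cap\cS'_{\lap}\subseteq\cS_{\lap}\cap\cS'_{\lap}$, which is exactly the regime in which a multiplicative bound is available.

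For $S_0$ I would route through the common domain $\cD^\diffpper=\domainE{\bk}{\bbh}\cap\domainE{\bk}{\bbh'}$ in three moves: (i) the left inequality of Lemma~\ref{lem:lap_eps} replaces $\domainE{\bk}{\bbh}$ by $\cD^\diffpper$ at no cost; (ii) Lemma~\ref{lem:lap_restricted}, applied with the \emph{fixed} domain $\cD^\diffpper$, trades $\bbh$ for $\bbh'$ at multiplicative cost $e^{\diffpper/2}$; and (iii) the right inequality of Lemma~\ref{lem:lap_eps}, used with $\bbh$ and $\bbh'$ swapped, moves from $\cD^\diffpper$ back to $\domainE{\bk}{\bbh'}$ at additive cost $\bar\delta$. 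Enlarging $S_0$ to $S$ and adding the $\bar\delta$ from the bad part then gives
\[
\Pr[\rnm{\bk,\tau}(\bbh,\domainE{\bk}{\bbh})\in S] \leq e^{\diffpper/2}\bigl(\Pr[\rnm{\bk,\tau}(\bbh',\domainE{\bk}{\bbh'})\in S]+\bar\delta\bigr)+\bar\delta,
\]
and regrouping the additive terms into $(e^{\diffpper/2}+1)\bar\delta$ is precisely the claim.

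The only place I expect genuine friction is checking the hypothesis of Lemma~\ref{lem:lap_restricted} for the domain $\cD^\diffpper$: I must verify $i_{(\bk+1)}\notin\cD^\diffpper$ and $i'_{(\bk+1)}\notin\cD^\diffpper$. This holds since $\cD^\diffpper\subseteq\domainE{\bk}{\bbh}$ and the $(\bk+1)$-st ranked index of $\bbh$ is by construction excluded from the top-$\bk$ index set $\domainE{\bk}{\bbh}$ (and symmetrically for $\bbh'$), which is exactly the condition that prevents one neighbor swap from simultaneously perturbing a returned count and the threshold $v_\bot$. I would also note in passing that Lemma~\ref{lem:lap_eps} is symmetric under $\bbh\leftrightarrow\bbh'$, since $\cD^\diffpper$, the set $\cS_{\lap}\cap\cS'_{\lap}$, and the neighbor relation are all symmetric, which is what licenses move (iii). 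Beyond these checks the argument is pure bookkeeping, with no distributional computation required, since all quantitative content has already been localized in Lemmas~\ref{lem:lap_del}--\ref{lem:lap_restricted}; if one instead wants the statement phrased directly for $\rTE{\Delta,\bk,\tau}$, one appeals additionally to the distributional equivalence of Lemma~\ref{lem:lap_equiv}.
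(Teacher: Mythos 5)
Your proposal is correct and follows essentially the same route as the paper's proof: the split of $S$ into $S\cap\cS'_{\lap}$ and $S\setminus\cS'_{\lap}$ coincides (given $S\subseteq\cS_{\lap}$) with the paper's split into $S\cap\{\cS_{\lap}\cap\cS'_{\lap}\}$ and $S\cap\cS^\delta_{\lap}$, and the chain Lemma~\ref{lem:lap_eps} (left), Lemma~\ref{lem:lap_restricted} on the common domain $\cD^\diffpper$, then Lemma~\ref{lem:lap_eps} (right, with the roles of $\bbh,\bbh'$ swapped) is exactly the paper's sequence of inequalities, yielding $e^{\diffpper/2}\Pr[\cdot]+(e^{\diffpper/2}+1)\bar\delta$. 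Your explicit checks that $i_{(\bk+1)},i'_{(\bk+1)}\notin\cD^\diffpper$ and that the symmetry of $\cD^\diffpper$ licenses the swapped use of Lemma~\ref{lem:lap_eps} are details the paper leaves implicit, and they are verified correctly.
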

\begin{proof}
We use the above results to get the following inequalities.
\begin{align*}
\Pr & [\rnm{\bk,\tau}(\bbh,\domainE{\bk}{\bbh}) \in S] \\
& =  \Pr[\rnm{\bk,\tau}(\bbh,\domainE{\bk}{\bbh}) \in S \cap \{ \cS_{\lap} \cap \cS_{\lap}' \}]  \\
& \qquad +  \Pr[\rnm{\bk,\tau}(\bbh,\domainE{\bk}{\bbh}) \in S \cap \{ \cS_{\lap}^\delta \}]\\
&  \leq \Pr[\rnm{\bk,\tau}(\bbh,\cD^\diffpper) \in S\cap \{ \cS_{\lap} \cap \cS_{\lap}' \}] \\
& \qquad + \bar{\delta} \\
& \leq e^{\diffpper/2} \\
& \qquad \cdot \Pr[\rnm{\bk,\tau}(\bbh',\cD^\diffpper) \in S\cap \{ \cS_{\lap} \cap \cS_{\lap}' \}]  + \bar{\delta} \\
& \leq e^{\diffpper/2} \\
& \quad \left( \Pr[\rnm{\bk,\tau}(\bbh',\domainE{\bk}{\bbh'}) \in S\cap \{ \cS_{\lap} \cap \cS_{\lap}' \}]  + \bar{\delta} \right) \\
& \qquad + \bar{\delta} \\
& \leq  e^{\diffpper/2} \Pr[\rnm{\bk,\tau}(\bbh',\domainE{\bk}{\bbh'}) \in S] + (e^{\diffpper/2} + 1)\bar{\delta}.
\end{align*}
\end{proof}

\end{document}